\newtheorem{thm}{Theorem}
\newtheorem{prop}[thm]{Proposition}
\newtheorem{cor}[thm]{Corollary}
\newtheorem{lem}[thm]{Lemma}
\theoremstyle{plain}
\newtheorem*{defn}{Definition}
\newcommand{\R}{\mathbb{R}}
\newcommand{\var}{\textrm{Var}}
\newcommand{\pr}{\textrm{Pr}}
\newcommand{\sgn}{\textrm{sgn}}
\newcommand{\E}{\textrm{E}}
\title{A Small PRG for Polynomial Threshold Functions of Gaussians}
\author{Daniel M. Kane}
\begin{document}

\maketitle

\section{Introduction}

A polynomial threshold function (PTF) is a function of the form $f(X)=\sgn(p(X))$ for some polynomial $p(X)$.  We say that $f$ is a degree-$d$ polynomial threshold function of $p$ is of degree at most $d$.  Polynomial threshold functions are a fundamental class of functions with applications to many fields such as circuit complexity \cite{curciutApp}, communication complexity \cite{commApp} and learning theory \cite{learningApp}.

We discuss the issue of pseudo-random generators for polynomial threshold functions of bounded degree.  Namely for some known probability distribution $\mathcal{D}$ on $\R^n$, we would like to find an explicit, easily computable function $G:\{0,1\}^S \rightarrow \R^n$ so that for any degree-$d$ polynomial threshold function, $f$,
$$
\left|\E_{Y\sim \mathcal{D}}[f(Y)] - \E_{X\in_u \{0,1\}^S}[f(G(X))] \right| < \epsilon.
$$
There are two natural distributions, $\mathcal{D}$, to study for this problem.  The first is that of the hypercube distribution, namely the uniform distribution over $\{0,1\}^n$.  The second is the Gaussian distribution.  The latter can often be thought of as a special case of the former.  In particular for polynomials of low influence (for which no one variable has significant control over the size of the polynomial), the invariance principle says that these polynomials behave similarly on the two distributions.  In fact many results about the hypercube distribution are proven by using the invariance principle to reduce to the Gaussian case where symmetry and the continuous nature of the random variables make things considerably easier.

In this paper we construct an explicit PRG for the Gaussian case.  In particular, for any real numbers $c,\epsilon>0$ and integer $d>0$ we construct a PRG fooling degree-$d$ PTFs of Gaussians to within $\epsilon$ of seed length $\log(n)2^{O_c(d)}\epsilon^{-4-c}.$  In particular we show that

\begin{thm}\label{MainThm}
Let $c>0$ be a constant.  For $\epsilon>0$, and $d$ a positive integer, let $N=2^{\Omega_c(d)}\epsilon^{-4-c}$ and $k= \Omega_c(d)$ be integers.  Let $X_i$ $1\leq i \leq N$ be independently sampled from $k$-independent families of standard Gaussians.  Let $X=\frac{1}{\sqrt{N}}\sum_{i=1}^N X_i$.  Let $Y$ be a fully independent family of standard Gaussians.  Then for any degree-$d$ polynomial threshold function, $f$,
$$
|\E[f(X)] - \E[f(Y)]| < \epsilon.
$$
\end{thm}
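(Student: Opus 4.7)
The plan is to combine a moment-matching argument (which is essentially immediate from the construction of $X$) with a smoothing/anti-concentration argument that handles the discontinuity of $\sgn$.

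First, I would prove the clean moment-matching statement: for every polynomial $q$ of degree at most $k$,
\[
\E[q(X)] = \E[q(Y)].
\]
Expanding $q(X) = q\bigl(N^{-1/2}\sum_i X_i\bigr)$ in monomials and using the independence of the $X_i$'s across $i$, each resulting term factors as $\prod_i \E[m_i(X_i)]$, where $m_i$ is a monomial in the coordinates of $X_i$ of total degree at most $k$. Any such monomial involves at most $k$ coordinates of $X_i$, so by the $k$-wise independence of the entries of $X_i$ its expectation matches the fully-Gaussian value. Summing, $\E[q(X)] = \E[q(Y)]$ coordinatewise in the polynomial expansion.

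Next I would handle the sign function. I replace $\sgn(p(\cdot))$ by $F_\delta(p(\cdot))$, where $F_\delta : \R \to [-1,1]$ is a smooth mollification of $\sgn$ that agrees with $\sgn$ on $|t| \geq \delta$ and whose derivatives are polynomially bounded in $\delta^{-1}$. The Carbery--Wright anti-concentration inequality gives $\Pr_Y[|p(Y)| < \delta \|p\|_2] \lesssim d\,\delta^{1/d}$, and by normalizing we may assume $\|p\|_2 = 1$; this controls $|\E[\sgn(p(Y))] - \E[F_\delta(p(Y))]|$. To get the analogous bound on the $X$ side I would transfer anti-concentration from $Y$ to $X$: approximate the indicator of $[-\delta,\delta]$ by a low-degree polynomial of degree $\ll k$, apply the moment-matching identity above, and pay the approximation error (using that $p(X)$ has all low moments bounded, which itself follows from moment matching together with hypercontractivity of $p(Y)$). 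This yields $|\E[\sgn(p(X))] - \E[F_\delta(p(X))]|$ small as well. Finally, I need $|\E[F_\delta(p(X))] - \E[F_\delta(p(Y))]|$: I expand $F_\delta$ as a polynomial of degree $K$ (say by truncating its Taylor or Hermite series on the bulk of $p(Y)$'s range, which is of size $\poly(\log 1/\epsilon)^{d/2}$ by Gaussian hypercontractivity), plus a remainder that is controlled on the bulk and by the $L^\infty$ bound off the bulk. Provided $K \cdot d \leq k$, the polynomial part contributes no error by the moment-matching step.

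The main obstacle is the joint optimization of $\delta$, $K$, and $k$ to achieve the claimed $\epsilon^{-4-c}$ rather than the naive $\epsilon^{-\Omega(d)}$ that comes from setting $\delta \sim \epsilon^d$ to directly kill the Carbery--Wright factor. I expect the route to the sharper bound is to avoid paying the $\delta^{1/d}$ anti-concentration loss by instead using an invariance-style hybrid argument: introduce intermediate sums where some of the $X_i$'s have already been replaced by truly Gaussian $Y_i$'s, and analyze the resulting difference using a Lindeberg-style Taylor expansion up to order $k$. Because each $X_i/\sqrt N$ has $L^2$ mass only $\sqrt{n/N}$ of the total and the first $k$ moments of $X_i$ match those of $Y_i$, each replacement step contributes an error of order $\delta^{-O(1)} N^{-k/2} \cdot \poly(d)$ rather than $\delta^{-k} N^{-k/2}$, by carefully applying the smoothing only to the \emph{difference} coming from the $k$th-order Taylor remainder and using Carbery-Wright once on the residual probability of being near zero. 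Summing the $N$ hybrid steps and balancing against the single anti-concentration loss $\delta^{1/d}$ yields the stated $N = 2^{\Omega_c(d)}\epsilon^{-4-c}$ after choosing $\delta$ polynomially small in $\epsilon$ and $k = \Omega_c(d)$ large enough.
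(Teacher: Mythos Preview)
Your proposal correctly locates the obstacle but does not actually overcome it. With a one-variable mollifier $F_\delta$ applied to $p$, the order-$T$ Taylor remainder in a single Lindeberg step is of size roughly $\|F_\delta^{(T)}\|_\infty \cdot \E|p(X')-p(X)|^T \sim \delta^{-T} |p|_2^T N^{-T/2}$, so summing over $N$ steps forces $N \gtrsim (|p|_2/\delta)^2$. For $F_\delta(p)$ to be a good proxy for $\sgn(p)$ you need $\pr[|p(Y)| < \delta]$ small, and Carbery--Wright delivers this only for $\delta \lesssim (\epsilon/d)^d |p|_2$, which puts you back at $N \gtrsim \epsilon^{-2d}$ --- precisely the naive bound the introduction flags. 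Your asserted improvement to $\delta^{-O(1)} N^{-k/2}$ per step has no mechanism behind it: ``applying the smoothing only to the difference'' and ``using Carbery--Wright once on the residual'' are not operations that convert a $T$th derivative of size $\delta^{-T}$ into one of size $\delta^{-O(1)}$, and nothing in a univariate $F_\delta(p(\cdot))$ can.

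The paper's route is a genuinely different smoothing. The proxy $g_\pm(X)$ is a function not of $p(X)$ alone but of the whole vector of squared noisy-derivative norms $q_{\ell,m}(X) = |p^{(\ell),m}_\theta(X)|_2^2$ for $0 \le \ell \le d$. The anti-concentration input (Proposition~\ref{SizeVsDerProp}, Corollary~\ref{AvDerIneqCor}) is that with probability $1 - 2^{O(d)}\epsilon$ the chain $|p^{(\ell)}_\theta(X)|_2 \ge \epsilon\,|p^{(\ell+1)}_\theta(X)|_2$ holds for every $\ell$; $g_\pm$ is built to equal the indicator of $\{\pm p > 0\}$ whenever this chain holds. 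The crucial design feature is that $\partial g_\pm/\partial q_{\ell,m}$ is large only at points where $q_{\ell,m}$ is small relative to $q_{\ell+1,m}$, and Proposition~\ref{VarianceBoundProp} shows that $q_{\ell+1,m}$ is exactly what bounds the variance of $q_{\ell,m}$ under a single replacement $X_i \to Y_i$. This coupling --- the derivative of the proxy is large only where the variance of the corresponding argument is small --- is what drives each hybrid step down to $2^{O_c(d)}\epsilon N^{-1}$ with $N = \epsilon^{-4-c}$ (Lemma~\ref{SmallChangeLem}), and it is the idea your outline is missing.
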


From this we can construct an efficient PRG for PTFs of Gaussians.  In particular

\begin{cor}\label{PRGCor}
For every $c>0$, there exists a PRG that $\epsilon$-fools degree-$d$ PTFs of Gaussians with seed length
$$
\log(n)2^{O_c(d)}\epsilon^{-4-c}.
$$
\end{cor}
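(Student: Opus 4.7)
The plan is to turn Theorem \ref{MainThm} into a bit-level PRG by explicitly sampling each of the $N$ vectors $X_i\in\R^n$ using few random bits. The PRG's seed is a concatenation of $N$ independent blocks, one per $X_i$; within each block the $n$ coordinates of $X_i$ are drawn from a discretized approximation to a $k$-wise independent family of standard Gaussians, instantiated via the standard polynomial-over-a-finite-field construction of $k$-wise independent random variables.

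Concretely, for each $X_i$ I would truncate the standard Gaussian at $T=O(\sqrt{\log(1/\epsilon)})$ standard deviations and round to the nearest multiple of $\delta=1/\poly(n,d,1/\epsilon)$, producing a discrete distribution with $M=O(T/\delta)$ atoms that is samplable from $O(\log M)$ uniform bits via its inverse CDF. A $k$-wise independent vector of $n$ such coordinates then uses $O(k\log(nM))=O_c(d)(\log n+\log(1/\epsilon))$ bits per $X_i$, via evaluating a uniformly random degree-$(k-1)$ polynomial over a field of size $\poly(n,M)$ at $n$ fixed points. Multiplying by $N$ and absorbing the multiplicative $O_c(d)$ and the additive $\log(1/\epsilon)$ factors into $2^{O_c(d)}\epsilon^{-4-c}$ (at the cost of a slightly larger $c$, which is harmless since the corollary quantifies over all $c>0$) yields the claimed total seed length $\log(n)\cdot 2^{O_c(d)}\epsilon^{-4-c}$. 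Since the inverse CDF is polynomial-time computable to high precision and the polynomial-based generator is explicit, the resulting PRG is efficient.

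The main obstacle is controlling the error introduced by discretization, since Theorem \ref{MainThm} is stated for exact Gaussians rather than their $\delta$-rounded, $T$-truncated analogues. Rounding each coordinate perturbs the input to $p$ by at most $\delta$ in $\ell_\infty$, which can flip $\sgn(p(X))$ only when $|p(X)|$ is small relative to the gradient of $p$ at $X$. On the truncated region the gradient of a degree-$d$ polynomial of unit variance is bounded by $\poly(n,d,\log(1/\epsilon))$, so it suffices to control $\pr[|p(X)|<\eta]$ for $\eta=\poly(n,d)\cdot\delta$. This is exactly where I would invoke the Carbery--Wright anticoncentration bound $\pr[|p(X)|<\eta\|p\|_2]\leq O(d\eta^{1/d})$ for degree-$d$ Gaussian polynomials (after normalizing $p$ to unit variance, which is WLOG since PTFs are scale-invariant). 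Taking $\delta$ a sufficiently small inverse polynomial in $(n,d,1/\epsilon)$ makes both this anticoncentration error and the Gaussian tail beyond $T$ at most $O(\epsilon)$, so that the triangle inequality combined with Theorem \ref{MainThm} yields the required $\epsilon$-fooling guarantee.
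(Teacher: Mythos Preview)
Your proposal is correct and mirrors the paper's approach almost exactly: discretize the Gaussians (the paper uses Box--Muller on rounded uniforms rather than the inverse CDF, a cosmetic difference), instantiate $k$-wise independence explicitly, bound the perturbation of $p$ pointwise, and use Carbery--Wright to control the sign-flip probability. One point to tighten in your write-up: Carbery--Wright applies to a true Gaussian $Y$, not to the generator's output $X$, so (as the paper does in its Lemma~\ref{SmallErrorFoolLem}) you should first pass from $\pr[p(X)<\eta]$ to $\pr[p(Y)<\eta]$ by applying Theorem~\ref{MainThm} to the shifted PTF $\sgn(p-\eta)$ before invoking anticoncentration.
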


Much of the previous work in constructing pseudo-random generators involves the use of functions of limited independence.  It was shown in \cite{DGJSV} that $\tilde{O}(\epsilon^{-2})$-independence fools degree-$1$ PTFs.  The degree-$2$ case was later dealt with in \cite{DKN}, in which it was shown that $\tilde{O}(\epsilon^{-9})$-independence sufficed (and that $O(\epsilon^{-8})$ sufficed for Gaussians).  The author showed that limited independence suffices to fool arbitrary degree PTFs of Gaussians in \cite{K}, but the amount of independence required was $O_d(\epsilon^{-2^{O(d)}})$.  In terms of PRGs that do not rely solely on limited independence, \cite{MZ} found a PRG for degree-$d$ PTFs on the hypercube distribution of size $\log(n) 2^{O(d)} \epsilon^{-8d-3}$.  Hence for $d$ more than constant sized, and $\epsilon$ less than some constant, our PRG will always beat the other known examples.

The basic idea of the proof of Theorem \ref{MainThm} will be to show that $X$ fools a function $g$ which is a smooth approximation of $f$.  This is done using the replacement method.   In particular, we replace the $X_i$ by fully independent families of Gaussians one at a time and show that at each step a small error is introduced.  This is done by replacing $g$ by its Taylor expansion and noting that small degree moments of the $X_i$ are identical to the corresponding moments of a fully independent family.

Naively, if $f=\sgn(p(x))$, we might try to let $g=\rho(p(x))$ for some smooth function $\rho$ so that $\rho(x)=\sgn(x)$ for $|x|>\delta$.  If we Taylor expand $g$ to order $T-1$, we find that the error in replacing $X_i$ by a fully random Gaussian is roughly the size of the $T^{th}$ derivative of $g$ times the $T^{th}$ moment of $p(X)-p(X')$, where $X'$ is the new random variable we get after replacing $X_i$.  We expect the former to be roughly $\delta^{-T}$ and the latter to be roughly $|p|_2^T N^{-T/2}$.  Hence, for this to work we will need $N\gg (|p|_2 \delta^{-1})^2$.  On the other hand, for $g$ to be a good approximation of $f$, we will need that the probability that $|p(Y)|<\delta$ to be small.  Using standard anti-concentration bounds, this requires $|p|_2\delta^{-1}$ to be roughly $\epsilon^{-d}$, and hence $N$ will be required to be at least $\epsilon^{-2d}$.

In order to fix this, we use a better notion of anti-concentration.  Our underlying heuristic is that for any polynomial $p$ it should be the case that $|p(X)|<\epsilon|p'(X)|$ with probability not much bigger than $\epsilon$.  This should hold because changing the value of $X$ by $\epsilon$ should adjust the value of $p(X)$ by roughly $\epsilon |p'(X)|$.  This allows us to state a strong version of anti-concentration.  In particular with probability roughly $1-\epsilon$ it should hold that
\begin{equation}\label{StrongAntiConcentrationEquation}
|p(X)| \geq \epsilon|p'(X)| \geq \epsilon^2|p''(X)| \geq \ldots \geq \epsilon^d |p^{(d)}(X)|.
\end{equation}
So $|p(X)|\geq \epsilon^d |p^{(d)}(X)|.$  It should be noted that $|p^{(d)}(X)|$ is independent of $X$ and can be thought of as a rough approximation to $|p|_2$.  In our analysis we will use a $g$ that is a function not only of $p(X)$, but also of $|p^{(m)}(X)|$ for $1\leq m \leq d$.  Instead of forcing $g$ to be a good approximation to $f$ whenever $|p(X)|\geq \epsilon^d$, we will only require it to be a good approximation to $f$ at $X$ where Equation \ref{StrongAntiConcentrationEquation} holds.  This gives us significant leeway since although the derivative to $g$ with respect to $p(X)$ will still be large at places, this will only happen when $|p'(X)|$ is small, and this in turn will imply that the variance in $p(X)$ achieved by replacing $X_i$ is comparably small.

In Section \ref{basicStuffSec}, we will review some basic properties of polynomial threshold functions.  In Section \ref{DerivativeSec}, we in introduce the notion of the derivative (which we call the noisy derivative) that will be useful for our purposes.  We then prove a number of Lemmas about this derivative and in particular prove a rigorous version of Equation \ref{StrongAntiConcentrationEquation}.  In Section \ref{averagingSec}, we discuss some averaging operators that will be useful in analyzing what happens when one of the $X_i$ is changed.  In Section \ref{mainSec}, we use these results and the above ideas to prove Theorem \ref{MainThm}.  In Section \ref{FiniteEntropySection}, we use this result to prove Corollary \ref{PRGCor}.

\section{Definitions and Basic Properties}\label{basicStuffSec}

We are concerned with polynomial threshold functions, so for completeness we give a definition
\begin{defn}
A degree-$d$ polynomial threshold function (PTF) is a function of the form
$$
f = \sgn(p(x))
$$
where $p$ is a polynomial of degree at most $d$.
\end{defn}

We are also concerned with the idea of fooling functions so we define
\begin{defn}
Let $f:\R^n\rightarrow \R$ we say that a random variable $X$ with values in $\R^n$ $\epsilon$-fools $f$ if
$$
|\E[f(X)] - \E[f(Y)]| \leq \epsilon
$$
where $Y$ is a standard $n$-dimensional Gaussian.
\end{defn}

For convenience we define the notation:
\begin{defn}
We use
$$
A \approx_\epsilon B
$$
to denote that
$$
|A-B| = O(\epsilon).
$$
\end{defn}

For a function on $\R^n$ we define its $L^k$ norm by
\begin{defn}
For $p:\R^n\rightarrow \R$ define
$$
|p|_k = \E_X [|p(X)|^k]^{1/k}.
$$
Where the above expectation is over $X$ a standard $n$-dimensional Gaussian.
\end{defn}

We will make use of the hypercontractive inequality.  The proof follows from Theorem 2 of \cite{hypercontractivity}.

\begin{lem}\label{hypercontractiveLem}
If $p$ is a degree-$d$ polynomial and $t>2$, then
$$
|p|_t \leq \sqrt{t-1}^d |p|_2.
$$
\end{lem}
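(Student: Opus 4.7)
The plan is to deduce this from the standard Nelson hypercontractive inequality for the Ornstein--Uhlenbeck semigroup. Recall that if $T_\rho$ denotes the OU semigroup on $L^2(\gamma)$ (where $\gamma$ is the standard Gaussian measure), then Nelson's theorem (which is what Theorem 2 of \cite{hypercontractivity} is a version of) asserts that $T_\rho : L^2(\gamma) \to L^t(\gamma)$ is a contraction exactly when $\rho^2 \leq 1/(t-1)$. I will set $\rho = 1/\sqrt{t-1}$ and use the fact that $T_\rho$ acts diagonally on the Hermite decomposition.

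Concretely, expand $p$ in the Hermite basis as $p = \sum_{|\alpha| \leq d} c_\alpha H_\alpha$, where the $H_\alpha$ are multivariate Hermite polynomials, orthonormal in $L^2(\gamma)$, and $|\alpha|$ denotes total degree. The semigroup acts by $T_\rho H_\alpha = \rho^{|\alpha|} H_\alpha$. Define the auxiliary polynomial
$$
g = \sum_{|\alpha| \leq d} \rho^{-|\alpha|} c_\alpha H_\alpha = \sum_{|\alpha| \leq d} (t-1)^{|\alpha|/2} c_\alpha H_\alpha,
$$
so that $T_\rho g = p$ by construction.

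Now the two ingredients combine directly. From the hypercontractive inequality, $|p|_t = |T_\rho g|_t \leq |g|_2$. On the other hand, by orthonormality of the Hermite basis,
$$
|g|_2^2 = \sum_{|\alpha|\leq d} (t-1)^{|\alpha|} c_\alpha^2 \leq (t-1)^d \sum_{|\alpha|\leq d} c_\alpha^2 = (t-1)^d |p|_2^2,
$$
where the inequality uses that $t-1 > 1$ and $|\alpha| \leq d$. Taking square roots and chaining the two bounds yields $|p|_t \leq (t-1)^{d/2} |p|_2 = \sqrt{t-1}^d |p|_2$.

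The only non-routine step is invoking Nelson's hypercontractivity itself, which is exactly what the citation provides; the rest is the standard trick of boosting a $2 \to t$ contraction bound into a degree-dependent bound by scaling each Hermite level by $\rho^{-|\alpha|}$ and absorbing the worst factor $\rho^{-d}$ into the constant. No delicate estimate is needed; the main thing to get right is the direction of the inequality defining the admissible $\rho$ (so that $\rho^2 = 1/(t-1)$ is the largest value for which $T_\rho$ contracts $L^2$ into $L^t$).
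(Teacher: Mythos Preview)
Your argument is correct and is exactly the standard derivation of this bound from Nelson's hypercontractive inequality; the paper itself does not spell out any steps and simply asserts that the lemma follows from Theorem~2 of \cite{hypercontractivity}. So you have filled in precisely the routine reduction the paper omits, and there is no substantive difference in approach.
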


In particular this implies the following Corollary:

\begin{cor}\label{NotTooSmallCor}
Let $p$ be a degree-$d$ polynomial in $n$ variables.  Let $X$ be a family of standard Gaussians.  Then
$$
\pr\left(|p(X)|\geq |p|_2/2\right) \geq 9^{-d}/2.
$$
\end{cor}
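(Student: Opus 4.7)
The plan is to prove this via a Paley--Zygmund style second-moment argument, using Lemma \ref{hypercontractiveLem} to control the fourth moment of $p$ in terms of its second moment.

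First, I would square the random variable of interest: the event $|p(X)| \geq |p|_2/2$ is the same as $p(X)^2 \geq |p|_2^2/4 = \tfrac{1}{4}\E[p(X)^2]$. Applying the Paley--Zygmund inequality to the nonnegative random variable $p(X)^2$ with threshold $\theta = 1/4$ yields
\[
\pr\bigl(p(X)^2 \geq \tfrac{1}{4}\E[p(X)^2]\bigr) \geq \left(1-\tfrac{1}{4}\right)^2 \frac{\E[p(X)^2]^2}{\E[p(X)^4]} = \frac{9}{16}\cdot\frac{|p|_2^4}{|p|_4^4}.
\]

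Next I would invoke Lemma \ref{hypercontractiveLem} with $t=4$, which gives $|p|_4 \leq \sqrt{3}^{\,d}\,|p|_2$, and therefore $|p|_4^4 \leq 9^d |p|_2^4$. Plugging this into the Paley--Zygmund bound yields
\[
\pr\bigl(|p(X)|\geq |p|_2/2\bigr) \geq \frac{9}{16}\cdot 9^{-d} \geq 9^{-d}/2,
\]
since $9/16 > 1/2$.

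There is no real obstacle here — the argument is a routine combination of the Paley--Zygmund inequality with the hypercontractive moment bound. The only thing to double-check is that the numerical constants line up so that the factor $9/16$ from $(1-\theta)^2$ dominates the $1/2$ in the statement, which it does. One could also formulate the proof as a direct truncation estimate ($\E[p^2 \mathbf{1}_{|p|<|p|_2/2}] \leq |p|_2^2/4$ and $\E[p^2 \mathbf{1}_{|p|\geq |p|_2/2}] \leq |p|_4^2 \pr(\cdot)^{1/2}$ by Cauchy--Schwarz), but the Paley--Zygmund route is cleaner and gives the bound immediately.
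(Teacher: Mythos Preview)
Your proposal is correct and is exactly the approach the paper takes: the paper's proof consists of a single sentence invoking the Paley--Zygmund inequality applied to $p^2$, and you have spelled out precisely those details, using Lemma~\ref{hypercontractiveLem} with $t=4$ to bound $|p|_4^4 \le 9^d |p|_2^4$. The constants you check are the intended ones.
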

\begin{proof}
This follows immediately from the Paley–Zygmund inequality applied to $p^2$.
\end{proof}

We also obtain:
\begin{cor}\label{GeneralHypercontractiveCor}
Let $p$ be a degree-$d$ polynomial, $t\geq 1$ a real number, then
$$
|p|_t \leq 2^{O_t(d)} |p|_1
$$
\end{cor}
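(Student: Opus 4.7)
The plan is to reduce everything to proving the single inequality $|p|_2 \le 2^{O(d)} |p|_1$, since once that is in hand, Lemma \ref{hypercontractiveLem} immediately gives $|p|_t \le \sqrt{t-1}^d |p|_2 \le 2^{O_t(d)} |p|_1$ for any $t \ge 2$, and the case $1 \le t \le 2$ follows by monotonicity of $L^r$ norms together with the $t = 2$ bound.

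To establish $|p|_2 \le 2^{O(d)}|p|_1$, I would use log-convexity of $L^r$ norms (equivalently, Hölder's inequality) to interpolate the $L^2$ norm between $L^1$ and $L^s$ for some fixed $s > 2$. Choosing $s = 4$ for concreteness, the relation $\tfrac{1}{2} = \tfrac{2/3}{1} + \tfrac{1/3}{4}$ gives
\[
|p|_2 \le |p|_1^{2/3}\, |p|_4^{1/3}.
\]
Now Lemma \ref{hypercontractiveLem} controls $|p|_4 \le 3^{d/2}|p|_2$, so substituting yields
\[
|p|_2 \le |p|_1^{2/3}\, 3^{d/6}\, |p|_2^{1/3},
\]
and rearranging (dividing by $|p|_2^{1/3}$, which is legitimate unless $p\equiv 0$, a trivial case) gives $|p|_2^{2/3} \le 3^{d/6}|p|_1^{2/3}$, i.e., $|p|_2 \le 3^{d/4}|p|_1 = 2^{O(d)}|p|_1$.

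Combining the two steps, for any fixed $t \ge 2$ we obtain $|p|_t \le (t-1)^{d/2}\cdot 3^{d/4}|p|_1 = 2^{O_t(d)}|p|_1$, which is the claimed bound. I do not expect a serious obstacle: the only minor care point is making sure the Hölder interpolation puts a strictly positive weight on $|p|_1$ so that the $|p|_2$ factor coming back from hypercontractivity can be absorbed — any choice $s > 2$ works, and $s = 4$ makes the constants transparent.
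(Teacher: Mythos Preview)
Your approach is sound and does yield the claimed bound, but there is an arithmetic slip in the interpolation step: the relation $\tfrac{1}{2} = \tfrac{2/3}{1} + \tfrac{1/3}{4}$ is false (the right-hand side equals $3/4$). The correct interpolation is $\tfrac{1}{2} = \tfrac{1/3}{1} + \tfrac{2/3}{4}$, giving $|p|_2 \le |p|_1^{1/3}|p|_4^{2/3}$, not $|p|_1^{2/3}|p|_4^{1/3}$. Carrying this through with $|p|_4 \le 3^{d/2}|p|_2$ gives $|p|_2 \le 3^{d/3}|p|_1^{1/3}|p|_2^{2/3}$ and hence $|p|_2 \le 3^d|p|_1$, which is still $2^{O(d)}|p|_1$; so the conclusion survives, only the constant changes.

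As for the comparison with the paper: the paper reaches $|p|_2 \le 2^{O(d)}|p|_1$ by a different (though closely related) route, invoking Corollary~\ref{NotTooSmallCor}, which is Paley--Zygmund applied to $p^2$ together with hypercontractivity for the fourth moment. That gives $\pr(|p(X)| \ge |p|_2/2) \ge 9^{-d}/2$, from which $|p|_1 \ge 9^{-d}|p|_2/4$ follows immediately. Your H\"older/log-convexity argument is a more direct packaging of the same ingredients (both ultimately hinge on controlling $|p|_4$ by $|p|_2$), and it has the minor advantage of not needing Corollary~\ref{NotTooSmallCor} as a separate statement. The paper's route, on the other hand, reuses a corollary that is needed elsewhere anyway.
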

\begin{proof}
By Lemma \ref{hypercontractiveLem}, it suffices to prove this for $t=2$.  This in turn follows from Corollary \ref{NotTooSmallCor}, which implies that $|p|_1 \geq 9^{-d}/4 |p|_2$.
\end{proof}

And
\begin{cor}\label{AverageInequalityCor}
If $p(X,Y),q(X,Y)$ are degree-$d$ polynomials in standard Gaussians $X$ and $Y$ then
$$
\pr_Y (|p(X,Y)|_{2,X} < \epsilon |q(X,Y)|_{2,X}) \leq 4\cdot 9^d \pr_{X,Y}(|p(X,Y)| < 4\cdot 3^d \epsilon |q(X,Y)|).
$$
Where $|r(X,Y)|_{2,X}$ denotes the $L^2$ norm over $X$, namely $\left(\E_X[r(X,Y)^2]\right)^{1/2}$.
\end{cor}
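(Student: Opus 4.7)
The strategy is to pointwise-in-$Y$ convert the $L^2$ comparison $|p|_{2,X}<\epsilon|q|_{2,X}$ into a comparison of the actual values $|p(X,Y)|$ and $|q(X,Y)|$, losing only a $9^d$-type factor, and then integrate over $Y$.

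Fix $Y$ in the event $A_Y := \{|p(X,Y)|_{2,X}<\epsilon|q(X,Y)|_{2,X}\}$. For such $Y$, I want to show that conditionally on this $Y$ the inner probability
\[
\pr_X\bigl(|p(X,Y)|<4\cdot 3^d\epsilon|q(X,Y)|\bigr)\;\geq\; \tfrac{1}{4}\,9^{-d}.
\]
Integrating this bound against $\pr_Y(A_Y)$ and comparing with the joint probability on the right-hand side then immediately yields the stated inequality.

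To establish the pointwise bound I would apply two opposite-direction estimates for fixed $Y$, treating $p(\cdot,Y)$ and $q(\cdot,Y)$ as degree-$d$ polynomials in the Gaussian $X$. First, Markov's inequality applied to $p(X,Y)^2$ gives $\pr_X(|p(X,Y)|\geq k|p|_{2,X})\leq 1/k^2$ for any $k$. Second, Corollary~\ref{NotTooSmallCor} applied to $q(\cdot,Y)$ yields $\pr_X(|q(X,Y)|\geq |q|_{2,X}/2)\geq 9^{-d}/2$. Choosing $k=2\cdot 3^d$ makes the Markov bound $1/k^2 = 9^{-d}/4$, so by a union bound the two good events hold simultaneously with probability at least $9^{-d}/2 - 9^{-d}/4 = 9^{-d}/4$. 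On this intersection,
\[
|p(X,Y)|<2\cdot 3^d\,|p|_{2,X}<2\cdot 3^d\,\epsilon|q|_{2,X}\leq 4\cdot 3^d\,\epsilon|q(X,Y)|,
\]
using $A_Y$ in the middle step and $|q|_{2,X}\leq 2|q(X,Y)|$ in the last.

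The only moving part is the choice of constants; nothing here is subtle. The hypercontractive Paley--Zygmund-type lower bound of Corollary~\ref{NotTooSmallCor} is what forces the $9^d$ loss, and Markov on $p^2$ is the optimal way to get a matching-order upper tail that still uses only the $L^2$ norm. Balancing the two so that the union-bound survives with room to spare dictates the constants $2\cdot 3^d$ (inside Markov) and $4\cdot 3^d$ (the final ratio). Integrating the conditional estimate over $Y$ then completes the proof.
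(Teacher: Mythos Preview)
Your proof is correct and follows essentially the same route as the paper: fix $Y$ in the event, apply Corollary~\ref{NotTooSmallCor} to $q(\cdot,Y)$ and Markov/Chebyshev to $p(\cdot,Y)^2$ with the same threshold $k=2\cdot 3^d$, combine to get conditional probability at least $9^{-d}/4$, and integrate over $Y$. The constants and the logic match the paper's argument exactly.
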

\begin{proof}
Given $Y$ so that $|p(X,Y)|_{2,X} < \epsilon |q(X,Y)|_{2,X}$, by Corollary \ref{NotTooSmallCor} we have that $|q(X,Y)|\geq |q(X,Y)|_{2,X}/2$ with probability at least $9^{-d}/2$.  Furthermore, $|p(X,Y)|\leq2\cdot 3^d |p(X,Y)|_{2,X}$ with probability at least $1-9^{-d}/4$.  Hence with probability at least $9^{-d}/4$ we have that
$$
|p(X,Y)| \leq 2\cdot 3^d |p(X,Y)|_{2,X} < 2\cdot 3^d \epsilon |q(X,Y)|_{2,X} \leq 4\cdot 3^d \epsilon |q(X,Y)|.
$$
So
\begin{align*}
\pr_{X,Y}(|p(X,Y)| < 4\cdot 3^d \epsilon |q(X,Y)|) \geq \frac{9^{-d}}{4} \pr_Y (|p(X,Y)|_{2,X} < \epsilon |q(X,Y)|_{2,X}).
\end{align*}
\end{proof}

\section{Noisy Derivatives}\label{DerivativeSec}

In this section we define our notion of the noisy derivative and obtain some of its basic properties.

\begin{defn}
Let $X$ and $Y$ be $n$-dimensional vectors and $\theta$ a real number.  Then we let
$$
N_Y^\theta(X) := \cos(\theta) X + \sin(\theta) Y.
$$
\end{defn}
If $X$ and $Y$ are independent Gaussians, $Y$ can be thought of as a noisy version of $X$ with $\theta$ a noise parameter.  We next define the noisy derivative.
\begin{defn}
Let $X,Y,Z$ be $n$-dimensional vectors, $f:\R^n\rightarrow \R$ a function, and $\theta$ a real number.  We define the \emph{noisy derivative of $f$ at $X$ with parameter $\theta$ in directions $Y$ and $Z$} to be
$$
D_{Y,Z}^\theta f(X) := \frac{f(N_Y^\theta(X)) - f(N_Z^\theta(X))}{\theta}.
$$
\end{defn}
It should be noted that if $X,Y,Z$ are held constant and $\theta$ goes to 0, the noisy derivative approaches the difference of the directional derivatives of $f$ at $X$ in the directions $Y$ and $Z$.  The noisy derivative for positive $\theta$ can be thought of as sort of a large scale derivative that covers slightly more than just a differential distance.

We also require a notion of the average size of the $\ell^{th}$ derivative.  In particular we define:
\begin{defn}
For $X$ an $n$-dimensional vector, $f:\R^n\rightarrow \R$ a function, $\ell$ a non-negative integer, and $\theta$ a real number, we define
$$
|f_\theta^{(\ell)}(X)|_2^2 := \E_{Y_1,Z_1,\ldots,Y_\ell,Z_\ell}[|D^\theta_{Y_1,Z_1}D^\theta_{Y_2,Z_2}\cdots D^\theta_{Y_\ell,Z_\ell}f(X)|^2],
$$
where the expectation is taken over independent, standard Gaussians $Y_i,Z_i$.
\end{defn}

\begin{lem}\label{HighDerConstLem}
For $p$ a degree-$d$ polynomial, and $\theta$ a real number
$$
|p^{(d)}_\theta(X)|_2
$$
is independent of $X$.
\end{lem}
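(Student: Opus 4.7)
The plan is to show by induction on $\ell$ that if $q$ is a polynomial of degree at most $k$ in the variable $X$ (possibly with coefficients depending on auxiliary variables), then
$$D^\theta_{Y_1,Z_1}\cdots D^\theta_{Y_\ell,Z_\ell}\,q(X)$$
is a polynomial of degree at most $k-\ell$ in $X$. Applying this with $q=p$ and $\ell=k=d$ will show that $D^\theta_{Y_1,Z_1}\cdots D^\theta_{Y_d,Z_d}\,p(X)$ is a polynomial of degree $0$ in $X$, i.e., does not depend on $X$ at all. Then
$$|p_\theta^{(d)}(X)|_2^2 = \E_{Y_i,Z_i}\left[\bigl|D^\theta_{Y_1,Z_1}\cdots D^\theta_{Y_d,Z_d}\,p(X)\bigr|^2\right]$$
becomes an average of a random variable that does not depend on $X$, and is therefore independent of $X$.

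The content of the induction step is the single-step claim that one application of $D^\theta_{Y,Z}$ drops the degree in $X$ by at least one. I would prove this by writing out
$$D^\theta_{Y,Z}\,q(X) = \frac{q(\cos\theta\,X + \sin\theta\,Y) - q(\cos\theta\,X + \sin\theta\,Z)}{\theta}$$
and expanding $q(\cos\theta\,X + \sin\theta\,Y)$ as a polynomial in the coordinates of $X$ and $Y$. The monomials of top degree $k$ in $X$ in this expansion arise only by always choosing the $\cos\theta\,X$ summand in each binomial factor, so they sum to $(\cos\theta)^k q_k(X)$ where $q_k$ is the degree-$k$ homogeneous part of $q$ (with the auxiliary variables treated as coefficients). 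In particular these top-degree-in-$X$ contributions are independent of $Y$, and the same analysis for the $Z$-expansion shows the degree-$k$-in-$X$ terms are identical in the two summands and cancel in the numerator.

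There is nothing I anticipate to be difficult here beyond this cancellation; the remainder is a routine degree count. The one mild subtlety is that after the first derivative the input already depends on fresh auxiliary variables $Y_i, Z_i$, but since each successive $D^\theta_{Y,Z}$ operator acts only on the $X$-argument, these auxiliary variables behave as inert coefficients in the induction step and do not interfere with the degree-reduction argument.
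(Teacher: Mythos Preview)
Your proposal is correct and follows essentially the same route as the paper: the paper's proof simply asserts that each application of $D^\theta_{Y,Z}$ lowers the degree in $X$ by one (so that after $d$ applications the result is constant in $X$), while you supply the explicit cancellation argument for that degree drop. The extra care you take about auxiliary variables playing the role of coefficients is appropriate and matches how the iterated operator acts.
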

\begin{proof}
This follows from the fact that for fixed $Y_i,Z_i$ that $$D^\theta_{Y_1,Z_1}D^\theta_{Y_2,Z_2}\cdots D^\theta_{Y_\ell,Z_\ell}p(X)$$ is independent of $X$.  This in turn follows from the fact that for any degree-$d$ polynomial $q$ and any $Y$,$Z$, $D^\theta_{Y,Z}q(X)$ is a degree-$(d-1)$ polynomial in $X$.
\end{proof}

We now prove our version of the statement that the value of a polynomial is probably not too much smaller than its derivative.
\begin{prop}\label{SizeVsDerProp}
Let $\epsilon,\theta>0$ be real numbers with $\theta = O(\epsilon)$.  Let $p$ be a degree-$d$ polynomial and let $X,Y,Z$ be standard independent Gaussians.  Then
$$
\pr_{X,Y,Z}(|p(X)| < \epsilon |D^\theta_{Y,Z}p(X)|) = O(d^2 \epsilon).
$$
\end{prop}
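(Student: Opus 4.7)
I would re-parameterize the noise to expose a principal Gaussian-linear term in $D^\theta_{Y,Z} p(X)$, then reduce the question to an anti-concentration estimate of $|p|$ against $|\nabla p|$.  Set $V_1 = (Y+Z)/\sqrt{2}$ and $V_2 = (Y-Z)/\sqrt{2}$; these are independent standard Gaussians, independent of $X$.  With $M = \cos\theta \cdot X + (\sin\theta/\sqrt{2})V_1$ and $\delta = (\sin\theta/\sqrt{2})V_2$, we have $N_Y^\theta(X) = M+\delta$ and $N_Z^\theta(X) = M-\delta$, so the symmetric Taylor expansion gives
\begin{equation*}
\theta \cdot D_{Y,Z}^\theta p(X) \;=\; p(M+\delta) - p(M-\delta) \;=\; 2\!\!\sum_{\substack{1 \le k \le d \\ k \text{ odd}}} \tfrac{1}{k!}\, D^{(k)}p(M)\bigl[\delta^{\otimes k}\bigr],
\end{equation*}
whose principal ($k=1$) term is $\sqrt{2}\,\sin\theta \cdot \langle V_2, \nabla p(M)\rangle$, while the $k \ge 3$ corrections carry an extra suppression $(\sin\theta)^{k-1} = O(\epsilon^{k-1})$.

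For the principal term, I would condition on $(X, V_1)$, so that $M$ is deterministic and $V_2$ remains an independent standard Gaussian; then $\langle V_2, \nabla p(M)\rangle$ is a centered Gaussian of standard deviation $|\nabla p(M)|$.  The Gaussian tail bound $\Pr(|Z| > t) \le 2 e^{-t^2/2}$ combined with the integration identity $\E[\min(1, 2 e^{-A^2/2})] = O(\int_0^1 \Pr(A \le \sqrt{\log(2/t)})\,dt)$ reduces the task to showing
\begin{equation*}
\Pr_{X, V_1}\!\bigl(|p(X)| < \eta\,|\nabla p(M)|\bigr) \;=\; O(d\,\eta) \qquad \text{for all } \eta > 0.
\end{equation*}
Since $M - X = -(1-\cos\theta)X + (\sin\theta/\sqrt{2})V_1$ has typical size $O(\theta) = O(\epsilon)$, I would apply Lemma \ref{hypercontractiveLem} to the degree-$(d{-}1)$ polynomial $\nabla p$ to replace $|\nabla p(M)|$ by $|\nabla p(X)|$ outside an event of probability $O(d\eta)$.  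What remains is the univariate-flavored claim $\Pr_X(|p(X)| < \eta|\nabla p(X)|) = O(d\eta)$, which I would obtain by projecting onto one-dimensional slices: for a fixed direction $v$, the restriction of $p$ to the line through the orthogonal complement in direction $v$ is a degree-$d$ univariate polynomial $q(t)$, whose bad set $\{|q(t)| < \eta |q'(t)|\}$ lies in at most $d$ intervals of total Gaussian measure $O(d\eta)$ around the real roots of $q$.  Averaging over a random Gaussian direction $v$ and using $|v \cdot \nabla p(X)| \gtrsim |\nabla p(X)|$ with constant probability converts this directional statement into the norm version.

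Each higher-order odd Taylor term is a degree-$d$ polynomial in $V_2$ scaled by $(\sin\theta)^{k-1}/\theta = O(\epsilon^{k-1})$; by hypercontractivity these are dominated by the principal term except on an event of measure $O(d\epsilon)$, each contributing $O(d\epsilon)$, for a total of $O(d^2\epsilon)$ after summing over the at most $d/2$ surviving odd orders.  The main obstacle is proving the anti-concentration of $|p|$ against $|\nabla p|$ with \emph{linear} dependence on $\eta$: the generic Carbery--Wright bound yields only $\eta^{1/d}$, and it is essential to exploit that the comparison is against the specific derivative direction $\nabla p(M)$ rather than a generic $L^2$-norm of $p$.
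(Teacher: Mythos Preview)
Your decomposition via $V_1,V_2$ and the symmetric Taylor expansion is correct, and the reduction of the principal term to an anti-concentration estimate of the form $\Pr(|p(X)|<\eta\,|\nabla p(M)|)=O(d\eta)$ (or $O(d^2\eta)$) is the right target.  The gap is in how you propose to prove that estimate.

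Your one-dimensional slicing argument does not close.  If $v$ is a \emph{unit} vector, then on each line $t\mapsto X_\perp+t v$ you control $|q(t)|$ against $|q'(t)|=|v\cdot\nabla p(X)|$, but to pass from $|v\cdot\nabla p(X)|$ to $|\nabla p(X)|$ you need $|v\cdot\nabla p(X)|\gtrsim |\nabla p(X)|$, which for a uniform unit vector holds only with $c\approx 1/\sqrt{n}$.  If instead $v$ is a standard Gaussian vector, then indeed $|v\cdot\nabla p(X)|\gtrsim |\nabla p(X)|$ with constant probability, but now the slice parameter carries the factor $|v|$: the event becomes $|q(t)|<(\eta/c)\,|v|\,|q'(t)|$, and after averaging you pick up $\E|v|\asymp\sqrt{n}$.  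Either way the bound degrades to $O(d^2\sqrt{n}\,\eta)$, which is useless here.  (Separately, the univariate bad set $\{|q|<\eta|q'|\}$ has measure $O(d^2\eta)$ by the root-counting argument, not $O(d\eta)$; but that is a minor point.)

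The paper sidesteps this entirely by never passing through $|\nabla p|$.  It works directly with the finite increment $p(X)-p(N_Y^\theta X)$ and observes that for independent Gaussians $X,Y$ the pair $(X_\phi,X_{\phi+\pi/2})$ with $X_\phi=\cos\phi\,X+\sin\phi\,Y$ is again a pair of independent standard Gaussians, so one may average over $\phi\in[0,2\pi]$ \emph{for free}.  With $X,Y$ now frozen, $p(X_\phi)=z^{-d}q(z)$ for $z=e^{i\phi}$ and some degree-$2d$ polynomial $q$, and the event $|p(X_\phi)|<(\epsilon/\theta)|p(X_\phi)-p(X_{\phi+\theta})|$ becomes a purely one-variable statement about the logarithmic derivative of $q$ on the unit circle, handled by factoring $q$ and counting how often $z$ lands near a root.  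This circle parametrization is exactly what supplies a dimension-free one-dimensional reduction; your slicing in a random linear direction cannot do that.  The higher-order Taylor terms and the $M\to X$ replacement in your plan would also need work, but the essential missing idea is this rotational averaging.
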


To prove this we use the following Lemma:
\begin{lem}\label{twidleXLem}
Let $\epsilon,\theta,p,d,X,Y$ be as above.  Then
$$
\pr_{X,Y}\left( |p(X)| < \frac{\epsilon}{\theta} |p(X) - p(N^\theta_Y(X))|\right) = O(d^2 \epsilon).
$$
\end{lem}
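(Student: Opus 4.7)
The plan is to use the $2$-dimensional rotational invariance of the joint distribution of $(X, Y)$ to reduce the statement to a deterministic anti-concentration fact about real trigonometric polynomials. Define $Z(\phi) := \cos(\phi) X + \sin(\phi) Y$. A coordinate-wise 2D Gaussian calculation shows that for every $\phi \in [0, 2\pi]$ the pair $(Z(\phi), Z(\phi+\theta))$ has the same joint distribution as $(X, N_Y^\theta(X))$. Averaging the bad-event indicator over $\phi$ uniform on $[0, 2\pi]$ gives
$$
\pr_{X,Y}\!\left(|p(X)| < \tfrac{\epsilon}{\theta}|p(X) - p(N_Y^\theta X)|\right) = \E_{X,Y}\!\left[\frac{1}{2\pi}\,\lambda\bigl(\{\phi \in [0,2\pi] : |h(\phi)| < \tfrac{\epsilon}{\theta}|h(\phi)-h(\phi+\theta)|\}\bigr)\right],
$$
where $h(\phi) := p(Z(\phi))$ is, for each fixed $X, Y$, a real trigonometric polynomial in $\phi$ of degree at most $d$. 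It therefore suffices to show that for any such $h$ the bad set of $\phi$'s has measure $O(d^2\epsilon)$.

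The heart of the proof is the infinitesimal analogue: for any real trigonometric polynomial $h$ of degree at most $d$,
$$
\lambda\bigl(\{\phi \in [0, 2\pi] : |h(\phi)| < \epsilon|h'(\phi)|\}\bigr) = O(d^2 \epsilon).
$$
To prove this, I write $h(\phi) = e^{-id\phi}Q(e^{i\phi})$ where $Q(z) = c\prod_{k=1}^{2d}(z-\alpha_k)$ is a complex polynomial of degree at most $2d$, and compute the logarithmic derivative
$$
\frac{h'(\phi)}{h(\phi)} = i\!\left(d + \sum_{k=1}^{2d}\frac{\alpha_k}{e^{i\phi} - \alpha_k}\right).
$$
For $\epsilon < 1/(4d)$ (otherwise the bound is trivial), the condition $|h'/h| > 1/\epsilon$ combined with the triangle inequality and pigeonhole forces some single term to satisfy $|\alpha_k|/|e^{i\phi}-\alpha_k| > 1/(8d\epsilon)$, that is $|e^{i\phi}-\alpha_k| < 8d\epsilon|\alpha_k|$. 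Since $|e^{i\phi}-\alpha_k| \geq \bigl||\alpha_k|-1\bigr|$, this constraint is vacuous unless $|\alpha_k|$ lies within $O(d\epsilon)$ of $1$, and when it is nonvacuous the corresponding $\phi$-set is an arc around $\arg\alpha_k$ of length $O(d\epsilon)$. Union-bounding over the $\leq 2d$ roots $\alpha_k$ gives the claimed $O(d^2\epsilon)$ bound.

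The main obstacle is bridging from $h'(\phi)$ back to the actual secant $(h(\phi+\theta) - h(\phi))/\theta$. Using the hypothesis $\theta = O(\epsilon)$, the cleanest approach is to rerun the factorization argument directly on the degree-$d$ trigonometric polynomial $g_\theta(\phi) := (h(\phi+\theta) - h(\phi))/\theta$; the quotient $g_\theta(\phi)/h(\phi)$ admits an expansion whose leading part is exactly the expression $i(d + \sum_k \alpha_k/(e^{i\phi}-\alpha_k))$ above, with lower-order corrections that are a small relative perturbation of the leading term throughout the regime $|e^{i\phi}-\alpha_k| = \Omega(\theta)$ that is relevant to the bad set. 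Alternatively, one can invoke the mean value theorem and Bernstein's inequality $|h''|_\infty \leq d^2 |h|_\infty$ to reduce to the derivative case at the cost of handling a residual exceptional set, which is controlled because the simultaneous smallness of $|h|$ and $|h'|$ localizes $\phi$ near one of the $\leq d$ double roots of $Q$ on the unit circle. Either route preserves the $O(d^2\epsilon)$ bound up to a constant factor.
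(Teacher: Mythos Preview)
Your setup is exactly the paper's: use rotational invariance to reduce to a deterministic measure bound for the trigonometric polynomial $h(\phi)=p(Z(\phi))$, then substitute $h(\phi)=e^{-id\phi}Q(e^{i\phi})$ with $Q(z)=c\prod_{k}(z-r_k)$ of degree at most $2d$ and localize near the roots. The one substantive difference is that the paper never passes through the derivative $h'$. It works with the secant directly, bounding
\[
\left|\log\frac{e^{-id\theta}Q(ze^{i\theta})}{Q(z)}\right|
\;\le\; d\theta \;+\; \sum_{k}\left|\log\frac{ze^{i\theta}-r_k}{z-r_k}\right|
\;\le\; d\theta \;+\; \theta\sum_{k} O\!\left(\frac{1}{|z-r_k|}\right),
\]
which immediately reduces the bad event $|h(\phi+\theta)/h(\phi)-1|>\theta/\epsilon$ to $\sum_k O(1/|z-r_k|)>1/(4\epsilon)$, and hence to $z$ lying within $O(d\epsilon)$ of some $r_k$. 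This is precisely the computation your ``first alternative'' in the bridging paragraph is pointing at; had you written it out, the derivative lemma would have been unnecessary. Your second alternative (mean value plus Bernstein) is a genuine detour and is not needed. One small phrasing issue: you say the regime $|e^{i\phi}-\alpha_k|=\Omega(\theta)$ is the one ``relevant to the bad set,'' but in fact the complementary regime $|e^{i\phi}-\alpha_k|=O(\theta)$ is not irrelevant---it is simply already of measure $O(d\theta)=O(d\epsilon)$ and can be thrown into the bad set at no cost.
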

\begin{proof}
The basic idea of the proof will be the averaging argument discussed in the Introduction to this Part.  We note that the above probability should be the same for any independent Gaussians, $X$ and $Y$.  We let $X_\phi := N^\phi_Y(X)$.  Note that $X_\phi$ and $X_{\phi+\pi/2}$ are independent of each other.  Furthermore, $N^\theta_{X_{\phi+\pi/2}}(X_\phi) = X_{\phi+\theta}$.  Hence for any $\phi$, the above probability equals
$$
\pr_{X,Y} \left( |p(X_\phi)| < \frac{\epsilon}{\theta}|p(X_\phi) - p(X_{\phi+\theta})|\right).
$$
We claim that for any values of $X$ and $Y$, that the average value over $\phi\in [0,2\pi]$ of the above is $O(d^2 \epsilon)$.

Notice that with $X$ and $Y$ fixed, $p(X_\phi)$ is a degree-$d$ polynomial in $\cos(\phi)$ and $\sin(\phi)$.  Letting $z=e^{i\phi}$ we have that $\cos(\phi)=\frac{z+z^{-1}}{2},\sin(\phi)=\frac{z-z^{-1}}{2i}$.  Hence $p(X_\phi) = z^{-d} q(z)$ for some polynomial $q$ of degree at most $2d$.

We wish to bound the probability that
$$
\frac{\theta}{\epsilon} < \frac{|z^{-d}q(z) - z^{-d}e^{-id\theta}q(ze^{i\theta})|}{|z^{-d}q(z)|} = \frac{|q(z) - e^{-id\theta}q(ze^{i\theta})|}{|q(z)|}.
$$
For $\frac{\theta}{\epsilon}$ sufficiently small, we may instead bound the probability that
$$
\left| \log\left( \frac{e^{-id\theta}q(ze^{i\theta})}{q(z)}\right) \right| > \frac{\theta}{2\epsilon}.
$$
On the other hand, we may factor $q(z)$ as $a\prod_{i=1}^{2d}(z-r_i)$ where $r_i$ are the roots of $q$.  The left hand side of the above is then at most
\begin{align*}
d\theta + \sum_{i=1}^{2d} \left|\log\left(\frac{ze^{i\theta}-r_i}{z-r_i} \right)\right| & \leq d\theta + \sum_{i=1}^d O\left( \left|\frac{ze^{i\theta}-z}{z-r_i}\right|\right)\\
& \leq d\theta + \theta \sum_{i=1}^{2d} O\left( \frac{1}{|z-r_i|}\right).
\end{align*}

Hence it suffices to bound the probability that
$$
d+\sum_{i=1}^{2d} O\left( \frac{1}{|z-r_i|}\right) > \frac{1}{2\epsilon}.
$$
If $4\epsilon > d^{-1}$, there is nothing to prove.  Otherwise, the above holds only if
$$
\sum_{i=1}^{2d} O\left( \frac{1}{|z-r_i|}\right) > \frac{1}{4\epsilon}.
$$
This in turn only occurs when $z$ is within $O(d \epsilon)$ of some $r_i$.  For each $r_i$ this happens with probability $O(d\epsilon)$ over $\phi$, and hence by the union bound, the above holds with probability $O(d^2 \epsilon).$
\end{proof}
Note that a tighter analysis could be used to prove the bound $O(d\log(d) \epsilon)$, but we will not need this stronger result.

Proposition \ref{SizeVsDerProp} now follows immediately by noting that $|p(X)|$ is less than $\epsilon |D^\theta_{Y,Z}p(X)|$ only when either $ |p(X)|/2 < \frac{\epsilon}{\theta} |p(X) - p(N^\theta_Y(X))|$ or $ |p(X)|/2 < \frac{\epsilon}{\theta} |p(X) - p(N^\theta_Z(X))|$.  This allows us to prove our version of Equation \ref{StrongAntiConcentrationEquation}.

\begin{cor}\label{AvDerIneqCor}
For $p$ a degree-$d$ polynomial, $X$ a standard Gaussian, $\epsilon,\theta>0$ with $\theta=O(\epsilon)$, and $\ell$ a non-negative integer,
$$
\pr_X(|p^{(\ell)}_\theta(X)|_2 \leq \epsilon |p^{(\ell+1)}_\theta(X)|_2) \leq 2^{O(d)}\epsilon.
$$
\end{cor}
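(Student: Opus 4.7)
The plan is to reduce the statement to Proposition \ref{SizeVsDerProp} by applying Corollary \ref{AverageInequalityCor} to strip off the $L^2$ norms over the auxiliary noise directions. To set up the notation, write $W_\ell := (Y_1, Z_1, \ldots, Y_\ell, Z_\ell)$ and define
\[
P_\ell(X, W_\ell) := D^\theta_{Y_1, Z_1}\cdots D^\theta_{Y_\ell, Z_\ell} p(X),
\]
so that by definition $|p^{(\ell)}_\theta(X)|_2^2 = \E_{W_\ell}[P_\ell(X, W_\ell)^2]$ and $|p^{(\ell+1)}_\theta(X)|_2^2 = \E_{W_{\ell+1}}[(D^\theta_{Y_{\ell+1}, Z_{\ell+1}} P_\ell)(X, W_\ell)^2]$. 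A direct computation (of the same flavor as in Lemma \ref{HighDerConstLem}) shows that each noisy derivative preserves total polynomial degree in the enlarged Gaussian vector, so $P_\ell$ is a polynomial of total degree at most $d$ in $(X, W_\ell)$ and $D^\theta_{Y_{\ell+1}, Z_{\ell+1}} P_\ell$ is one of total degree at most $d$ in $(X, W_{\ell+1})$.

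Next, I would regard $P_\ell$ as a polynomial on the larger vector $(X, W_{\ell+1})$ that happens not to involve $Y_{\ell+1}, Z_{\ell+1}$; in particular the integral defining $|p^{(\ell)}_\theta(X)|_2$ is equal to the $L^2$ norm of this padded $P_\ell$ in $W_{\ell+1}$. Now apply Corollary \ref{AverageInequalityCor} with the roles of its $X$ and $Y$ swapped, treating $X$ as the outer variable and $W_{\ell+1}$ as the inner averaging variable. This yields
\[
\pr_X\big(|p^{(\ell)}_\theta(X)|_2 \le \epsilon |p^{(\ell+1)}_\theta(X)|_2\big) \le 4\cdot 9^d\, \pr_{X, W_{\ell+1}}\!\Big(|P_\ell(X, W_\ell)| < 4\cdot 3^d \epsilon\,|D^\theta_{Y_{\ell+1}, Z_{\ell+1}} P_\ell(X, W_\ell)|\Big).
\]

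To bound the right hand side, condition on $W_\ell$. For each fixing, $X \mapsto P_\ell(X, W_\ell)$ is a polynomial in $X$ of degree at most $d$, and Proposition \ref{SizeVsDerProp} applied with parameter $\epsilon' = 4\cdot 3^d \epsilon$ (still satisfying $\theta = O(\epsilon')$ since $\epsilon' \ge \epsilon$) bounds the conditional probability by $O(d^2 \cdot 3^d \epsilon)$. This bound is uniform in $W_\ell$, hence survives integration over it, and multiplying by the prefactor $4\cdot 9^d$ produces the claimed overall bound of $2^{O(d)}\epsilon$.

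The only slightly delicate step is the invocation of Corollary \ref{AverageInequalityCor}: the two $L^2$ norms being compared are naturally integrals over different parameter blocks ($W_\ell$ versus $W_{\ell+1}$), and one must first pad $P_\ell$ up to a polynomial on the common Gaussian vector $(X, W_{\ell+1})$ before the corollary is applicable. Once that identification is in place, the reduction to the single-polynomial anti-concentration statement of Proposition \ref{SizeVsDerProp} is essentially mechanical.
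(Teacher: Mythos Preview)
Your proposal is correct and follows essentially the same route as the paper: both arguments pad $P_\ell$ with the extra noise variables $(Y_{\ell+1},Z_{\ell+1})$, invoke Corollary~\ref{AverageInequalityCor} to pass from the $L^2$-averaged inequality to a pointwise one, and then apply Proposition~\ref{SizeVsDerProp} (at scale $4\cdot 3^d\epsilon$) conditionally on the remaining noise directions. The only cosmetic difference is that the paper states the Proposition~\ref{SizeVsDerProp} bound first and then pushes it through Corollary~\ref{AverageInequalityCor}, whereas you strip the norms first and then invoke the proposition; the two orderings are logically equivalent.
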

\begin{proof}
Let $Y_i,Z_i$ be standard Gaussians independent of each other and of $X$ for $1\leq i \leq \ell+1$.  Applying Proposition \ref{SizeVsDerProp} to $D^\theta_{Y_2,Z_2}\cdots D^\theta_{Y_{\ell+1},Z_{\ell+1}}p(X)$, we find that
\begin{align*}
\pr_{X,Y_1,Z_1} & \left( | D^\theta_{Y_2,Z_2}\cdots D^\theta_{Y_{\ell+1},Z_{\ell+1}} p(X)| \leq 4 \cdot 3^d \epsilon | D^\theta_{Y_1,Z_1}\cdots D^\theta_{Y_{\ell+1},Z_{\ell+1}} p(X)|\right)\\ & = O(d^2 3^d \epsilon).
\end{align*}
Noting that
$$
|p^{(\ell)}_\theta(X)|_2 = |D^\theta_{Y_2,Z_2}\cdots D^\theta_{Y_{\ell+1},Z_{\ell+1}} p(X)|_{2,(Y_1,Z_1,\ldots,Y_{\ell+1},Z_{\ell+1})}
$$
and
$$
|p^{(\ell+1)}_\theta(X)|_2 = |D^\theta_{Y_1,Z_1}\cdots D^\theta_{Y_{\ell+1},Z_{\ell+1}} p(X)|_{2,(Y_1,Z_1,\ldots,Y_{\ell+1},Z_{\ell+1})},
$$
Corollary \ref{AverageInequalityCor} tells us that
$$
\pr_X\left(|p^{(\ell)}_\theta(X)|_2 < \epsilon |p^{(\ell+1)}_\theta(X)|_2\right) = O(d^2 27^d \epsilon).
$$
\end{proof}

\section{Averaging Operators}\label{averagingSec}

A key ingredient of our proof will be to show that if we replace $X$ by $X'$ by replacing one of the $X_i$ by a random Gaussian, that the variance of $|p^{(\ell)}(X')|$ is bounded in terms of $|p^{(\ell+1)}(X)|$ (see Proposition \ref{VarianceBoundProp}).  Unfortunately, for our argument to work nicely we would want it bounded in terms of the expectation of $|p^{(\ell+1)}(X')|$.  In order to deal with this issue, we will need to study the behavior of the expectation of $q(X')$ for polynomials $q$, and in particular when it is close to $q(X)$. To get started on this project, we define the following averaging operator:
\begin{defn}
Let $X$ be an $n$-dimensional vector, $f:\R^n\rightarrow \R$ a function, and $\theta$ a real number.  Define
$$
A^\theta f(X) := \E_Y\left[ f(N^\theta_Y(X))\right].
$$
Where the expectation is over $Y$ a standard Gaussian.
\end{defn}
Note that the $A^\theta$ form the Ornstein-Uhlenbeck semigroup with $A^\theta=T_t$ where $\cos(\theta)=e^{-t}$.  The composition law becomes $T_{t_1}T_{t_2}=T_{t_1+t_2}$. We express this operator in terms of $\theta$ rather than $t$, since it fits in with our $N^\theta$ notation, which makes it more convenient for our purposes.

We also define averaged versions of our derivatives
\begin{defn}
For $p$ a polynomial, $\ell,m$ non-negative integers, $X$ a vector, and $\theta$ a real number let
$$
|p^{(\ell),m}_\theta (X)|_2^2:=E_{Y_1,\ldots,Y_m}\left[|p^{(\ell)}_\theta (N^\theta_{Y_1}\cdots N^\theta_{Y_m}X)|_2^2\right] = (A^\theta)^m |p^{(\ell)}_\theta (X)|_2^2.
$$
\end{defn}

We claim that for $X$ a standard Gaussian, that with fairly high probability $|p^{(\ell),m}_\theta (X)|$ is close to $|p^{(\ell)}_\theta (X)|$.  In particular:
\begin{lem}\label{AverageCloseLem}
If $p$ is a degree-$d$ polynomial, and $\ell$ and $m$ are non-negative integers, $X$ a standard Gaussian, and $\epsilon,\theta>0$, then
$$
\pr_X\left(||p^{(\ell),m+1}_\theta (X)|_2^2-|p^{(\ell),m}_\theta (X)|_2^2| > \epsilon |p^{(\ell),m}_\theta (X)|_2^2 \right) \leq 2^{O(d)} \theta\epsilon^{-1}.
$$
\end{lem}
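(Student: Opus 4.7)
The plan is to leverage the representation $R_m(X) := |p^{(\ell),m}_\theta(X)|_2^2 = \E_W[P(X,W)^2]$, where $P(X,W) := D^\theta_{Y_1,Z_1}\cdots D^\theta_{Y_\ell,Z_\ell}\,p(N^\theta_{U_1}\cdots N^\theta_{U_m}X)$ and $W$ packages the auxiliary Gaussians $(Y_i,Z_i,U_j)$, together with the identity $R_{m+1}(X)=A^\theta R_m(X)$. Factoring the difference as $(\text{sum})\times(\text{difference})$ and applying Cauchy-Schwarz in $(Y,W)$ — together with the trivial bound $\E_{Y,W}[(P(N^\theta_Y X,W)+P(X,W))^2]\leq 2R_m(X)+2R_{m+1}(X)$ — yields
$$(R_{m+1}(X)-R_m(X))^2\leq 2(R_m(X)+R_{m+1}(X))\,A(X),\quad A(X):=\E_{Y,W}\bigl[(P(N^\theta_Y X,W)-P(X,W))^2\bigr].$$
Treating this as a quadratic inequality in the ratio $R_{m+1}(X)/R_m(X)$, the event $|R_{m+1}-R_m|>\epsilon R_m$ forces $A(X)\geq c\,\epsilon^2 R_m(X)$ for an absolute constant $c>0$, with no case splitting needed.

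Next I would show the pointwise bound $A(X)\leq 2^{O(d)}\theta^2\,|p^{(\ell+1),m}_\theta(X)|_2^2$. The one-sided shift $P(N^\theta_Y X,W)-P(X,W)$ does not match the paper's symmetric noisy derivative $\theta D^\theta_{Y,Z}P=P(N^\theta_Y X,W)-P(N^\theta_Z X,W)$, so I average over an independent Gaussian $Z$ to obtain
$$P(N^\theta_Y X,W)-P(X,W)=\theta\,\E_Z[D^\theta_{Y,Z}P(X,W)]+\bigl(A^\theta P(\cdot,W)(X)-P(X,W)\bigr).$$
After squaring, applying Jensen to pull the $\E_Z$ inside, and averaging over $(Y,W)$, the first piece is at most $2\theta^2|p^{(\ell+1),m}_\theta(X)|_2^2$ because $\E_{Y,Z,W}[(D^\theta_{Y,Z}P(X,W))^2]=|p^{(\ell+1),m}_\theta(X)|_2^2$ by definition. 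The $(A^\theta-I)P$ correction has Hermite coefficients of size $O(d\theta^2)$ times those of $P$, contributing an $O(\theta^4)$-sized term that can be absorbed into the main term given the $2^{O(d)}$ slack in the final bound.

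Combining these two steps, the event $|R_{m+1}-R_m|>\epsilon R_m$ implies $|p^{(\ell),m}_\theta(X)|_2<2^{O(d)}(\theta/\epsilon)\,|p^{(\ell+1),m}_\theta(X)|_2$. I then invoke an averaged version of Corollary \ref{AvDerIneqCor} to bound the probability of this by $2^{O(d)}\theta/\epsilon$. The averaged version is obtained by applying Corollary \ref{AverageInequalityCor} to the polynomial $Q(X,U,V):=D^\theta_{Y_1,Z_1}\cdots D^\theta_{Y_\ell,Z_\ell}\,p(N^\theta_{U_1}\cdots N^\theta_{U_m}X)$ with outer variable $X$ and inner variables $(U,V)=(U_1,\ldots,U_m,Y_1,Z_1,\ldots,Y_\ell,Z_\ell)$, using that $N^\theta_{U_1}\cdots N^\theta_{U_m}X$ is a standard Gaussian whenever $X,U_1,\ldots,U_m$ are i.i.d.\ standard Gaussians, which reduces the relevant inner probability to Proposition \ref{SizeVsDerProp}.

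The main obstacle I expect is the pointwise estimate of $A(X)$ in the second paragraph: the $(A^\theta-I)P$ correction is only small in $L^2(X)$, not obviously pointwise, so care is needed either to show that it is dominated by the derivative-squared term on the event of interest, or to invoke a further Carbery-Wright-type anti-concentration bound to control it separately from the main term.
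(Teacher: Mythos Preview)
Your route is far more circuitous than necessary, and the obstacle you flag is real and not easily removed. The paper's proof avoids the detour through $|p^{(\ell+1),m}_\theta(X)|_2^2$ entirely.

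The key observation you are missing is that $q(X):=|p^{(\ell),m}_\theta(X)|_2^2$ is itself a polynomial of degree at most $2d$ in $X$, and that by definition $|p^{(\ell),m+1}_\theta(X)|_2^2 = A^\theta q(X)=\E_Y[q(N^\theta_Y X)]$. Thus the event in the lemma reads $\bigl|\E_Y[q(N^\theta_Y X)-q(X)]\bigr|>\epsilon\,q(X)$. For fixed $X$, the map $Y\mapsto q(N^\theta_Y X)-q(X)$ is a degree-$\le 2d$ polynomial in $Y$ whose mean has absolute value exceeding $\epsilon q(X)$; hence its $L^2$ norm does too, and Corollary~\ref{NotTooSmallCor} gives $\Pr_Y\bigl[|q(N^\theta_Y X)-q(X)|>\tfrac{\epsilon}{2}q(X)\bigr]\ge 2^{-O(d)}$. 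But by Lemma~\ref{twidleXLem} applied directly to $q$ (with $\epsilon'=2\theta/\epsilon$), the joint probability $\Pr_{X,Y}\bigl[|q(N^\theta_Y X)-q(X)|>\tfrac{\epsilon}{2}q(X)\bigr]$ is $O(d^2\theta/\epsilon)$. Chaining these gives the bound.

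By contrast, your plan requires the pointwise estimate $A(X)\le 2^{O(d)}\theta^2|p^{(\ell+1),m}_\theta(X)|_2^2$, and the $(A^\theta-I)P$ correction term genuinely obstructs this: for a degree-$D$ polynomial $f$, $(A^\theta-I)f(X)$ is of order $D\theta^2$ times a polynomial comparable to $f$ itself (think $f(X)=X^D$), which gives no pointwise control in terms of a \emph{derivative} quantity. Moreover, your claimed identity $\E_{Y,Z,W}[(D^\theta_{Y,Z}P(X,W))^2]=|p^{(\ell+1),m}_\theta(X)|_2^2$ is not exact: in your expression the $N^\theta_Y,N^\theta_Z$ perturbations are applied to $X$ \emph{before} the $N^\theta_{U_j}$'s, whereas in the definition of $|p^{(\ell+1),m}_\theta(X)|_2^2$ the extra noisy derivative acts at the point $N^\theta_{U_1}\cdots N^\theta_{U_m}X$. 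The resulting Gaussian pairs have different covariances (differing by $\sin^2\theta(1-\cos^{2m}\theta)$), so the identity fails, though only by lower-order terms.

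In short: drop the factorisation and the passage to the $(\ell+1)$-derivative; treat $|p^{(\ell),m}_\theta(\cdot)|_2^2$ as a single degree-$2d$ polynomial and feed it straight into Lemma~\ref{twidleXLem} via Corollary~\ref{NotTooSmallCor}.
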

\begin{proof}
If $||p^{(\ell),m+1}_\theta (X)|_2^2-|p^{(\ell),m}_\theta (X)|_2^2| > \epsilon |p^{(\ell),m}_\theta (X)|_2^2$, then by Corollary \ref{NotTooSmallCor} with probability $2^{O(d)}$ over a standard normal $Y$ we have that $$||p^{(\ell),m}_\theta (N^\theta_Y(X))|_2^2-|p^{(\ell),m}_\theta (X)|_2^2| > \epsilon |p^{(\ell),m}_\theta (X)|_2^2.$$  By Lemma \ref{twidleXLem}, this happens with probability $O(d^2 \theta \epsilon^{-1})$, and hence our original event happens with probability at most $2^{O(d)} \theta \epsilon^{-1}$.
\end{proof}

We bound the variance of these polynomials as $X$ changes.

\begin{prop}\label{VarianceBoundProp}
Let $p$ be a degree-$d$ polynomial, and $\theta>0$, $\ell,m$ non-negative integers, then for $X$ any vector and $Y$ a standard Gaussian
\begin{align*}
\var_Y \left[ |p^{(\ell),m}_\theta(N^\theta_Y X)|_2^2 \right]  \leq 2^{O(d)}\theta |p^{(\ell+1),m}_\theta(X)|_2^2 |p^{(\ell),m+1}_\theta(X)|_2^2.
\end{align*}
\end{prop}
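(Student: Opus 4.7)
The plan is to mimic the (commented-out) argument for the analogous $m=0$ statement, adapted to handle the extra $A^\theta$ averagings. The engine is a variance-from-increments lemma for low-degree polynomials: if $q(W,Y)$ is a polynomial of total degree $d$ in two blocks of independent standard Gaussians, then
\[
\var_Y\!\bigl[\E_W[q(W,Y)^2]\bigr] \;\le\; 2^{O(d)}\, \E_{W,Y,Z}\!\bigl[(q(W,Y)-q(W,Z))^2\bigr] \cdot \E_{W,Y}[q(W,Y)^2].
\]
To prove the lemma I would combine $\var[A]=\tfrac12\E[(A-B)^2]$ (for $A,B$ i.i.d.), the factorization $a^2-b^2=(a-b)(a+b)$, hypercontractivity via Corollary~\ref{GeneralHypercontractiveCor} to convert an outer $L^2$-norm into an $L^1$-norm at cost $2^{O(d)}$, Cauchy--Schwarz, and $(a+b)^2\le 2(a^2+b^2)$.

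I would then instantiate with $W = (U_1,\dots,U_m,Y_1,Z_1,\dots,Y_\ell,Z_\ell)$ and
\[
q(W,Y)\;:=\;D^\theta_{Y_1,Z_1}\cdots D^\theta_{Y_\ell,Z_\ell}\, p\bigl(N^\theta_{U_1}\cdots N^\theta_{U_m}N^\theta_Y X\bigr),
\]
a polynomial of total degree $d$. Directly from the definitions of the noisy-derivative norms, $\E_W[q(W,Y)^2] = |p^{(\ell),m}_\theta(N^\theta_Y X)|_2^2$, matching the LHS of the proposition, and $\E_{W,Y}[q(W,Y)^2] = |p^{(\ell),m+1}_\theta(X)|_2^2$, matching one of the RHS factors. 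What remains is to bound the increment $\E_{W,Y,Z}[(q(W,Y)-q(W,Z))^2]$ by $2^{O(d)}\,\theta\,|p^{(\ell+1),m}_\theta(X)|_2^2$; plugging this into the variance lemma yields the proposition.

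\textit{The main obstacle} is this increment bound. Writing $T:=N^\theta_{U_1}\cdots N^\theta_{U_m}$ and $\tilde p:=D^\theta_{Y_1,Z_1}\cdots D^\theta_{Y_\ell,Z_\ell}\,p$, affine linearity of $T$ gives $T(N^\theta_Y X)-T(N^\theta_Z X) = c^m\sin(\theta)(Y-Z)$ with $c=\cos\theta$, so
\[
q(W,Y)-q(W,Z) = \tilde p(V_\ast + c^m\sin\theta\cdot Y) - \tilde p(V_\ast + c^m\sin\theta\cdot Z),
\]
where $V_\ast := c^{m+1}X + \sin\theta\,(U_1+cU_2+\cdots+c^{m-1}U_m)$ is free of $Y,Z$. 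The naive hope is to equate this increment directly with $\theta^2\,|p^{(\ell+1),m}_\theta(X)|_2^2$, but $T$ does not commute with $N^\theta_Y$, so the joint distribution of the pair of perturbed inputs is not literally that appearing in the definition of $|p^{(\ell+1),m}_\theta|_2^2$. I would resolve this by applying Gaussian Poincar\'e to the isotropic Gaussians $Y,Z$ to bound the $(Y,Z)$-average of the squared increment by $(c^m\sin\theta)^2 \le \theta^2$ times the $L^2$-norm of the ordinary gradient of $\tilde p$ over the distribution of $V_\ast + c^m\sin\theta\cdot Y$, and then invoking hypercontractivity (Corollary~\ref{GeneralHypercontractiveCor}) to trade the ordinary gradient for the noisy derivative at cost $2^{O(d)}$. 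This produces the bound with a factor $\theta^2$, which weakens to $\theta$ as stated since we may assume $\theta\le 1$.
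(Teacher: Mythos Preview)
Your variance-from-increments lemma and your choice of $q$ are exactly the paper's: the paper sets
\[
q(V,Y)\;=\;D_{Y_2,Z_2}^\theta\cdots D_{Y_{\ell+1},Z_{\ell+1}}^\theta p\bigl(N^\theta_{W_1}\cdots N^\theta_{W_m}N^\theta_Y(X)\bigr)
\]
and then simply records the three identifications $\E_V[q^2]=|p^{(\ell),m}_\theta(N^\theta_YX)|_2^2$, $\E_{V,Y}[q^2]=|p^{(\ell),m+1}_\theta(X)|_2^2$, and $\E_{V,Y,Z}[(q(V,Y)-q(V,Z))^2]=\theta^2|p^{(\ell+1),m}_\theta(X)|_2^2$ (the paper writes $\theta$ rather than $\theta^2$, a harmless slip since the proposition only needs $\theta$). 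So the paper treats the increment identification as immediate; you are right that for $m\ge1$ it is not literally an equality, because the outermost $N^\theta$ sits on the inside of the $N^\theta_{W_j}$'s in $q$ but on the outside in the definition of $|p^{(\ell+1),m}_\theta|_2^2$.

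That said, the needed \emph{inequality} falls out almost for free, and much more cheaply than your Poincar\'e detour. Both quantities have the form $\E[(r(P)-r(Q))^2]$ with $r=D^\theta_{Y_2,Z_2}\cdots D^\theta_{Y_{\ell+1},Z_{\ell+1}}p$ and $(P,Q)$ a jointly Gaussian pair with common mean $c^{m+1}X$ and common marginal variance $1-c^{2(m+1)}$; one computes $\mathrm{Cov}(P,Q)=1-c^{2m}$ on the $q$-side versus $c^2(1-c^{2m})$ on the $|p^{(\ell+1),m}_\theta|$-side. Expanding $y\mapsto r(c^{m+1}X+\sqrt{1-c^{2(m+1)}}\,y)$ in Hermite polynomials shows $\E[(r(P)-r(Q))^2]=2\sum_\alpha\hat s_\alpha^2(1-\rho^{|\alpha|})$ is monotone decreasing in the correlation $\rho$, so $\E_{V,Y,Z}[(q(V,Y)-q(V,Z))^2]\le\theta^2|p^{(\ell+1),m}_\theta(X)|_2^2$ outright, with no $2^{O(d)}$ loss.

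By contrast, your own patch has a gap at the last step. After Poincar\'e you hold $c^{2m}s^2\,\E_{W,Y}[|\nabla\tilde p(N^\theta_{U_1}\cdots N^\theta_{U_m}N^\theta_YX)|^2]$ and want to bound it by $2^{O(d)}|p^{(\ell+1),m}_\theta(X)|_2^2$. Hypercontractivity (Corollary~\ref{GeneralHypercontractiveCor}) only compares $L^t$-norms of a fixed polynomial; it does not convert an ordinary gradient into a noisy derivative, and in any case the evaluation point on your side has undergone $m{+}1$ applications of $N^\theta$ while $|p^{(\ell+1),m}_\theta(X)|_2^2$ uses only $m$. Neither the derivative-type swap nor the evaluation-point mismatch is handled by the tool you cite.
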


We begin by proving a Lemma:
\begin{lem}
Let $q(X,Y)$ be a degree-$d$ polynomial and let $X$, $Y$ and $Z$ be independent standard Gaussians.  Then
$$
\var_Y\left[ \E_X\left[q(X,Y)^2 \right]\right] \leq 2^{O(d)} \E_{X,Y,Z}\left[(q(X,Y)-q(X,Z))^2 \right] \E_{X,Y}\left[ q(X,Y)^2 \right].
$$
\end{lem}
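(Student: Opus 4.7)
The plan is to mimic the standard trick for bounding the variance of a functional of a Gaussian by the expected squared difference obtained when the Gaussian is resampled, combined with hypercontractivity to pass from $L^2$ to $L^1$ with only an $O(1)^d$ loss. The overall template is: identity for variance of i.i.d.\ variables, then hypercontractivity, then Jensen/triangle, then factor $a^2-b^2=(a-b)(a+b)$, then Cauchy--Schwarz.

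First, I would introduce a second independent Gaussian $Z$ and use the elementary identity $\var[A]=\tfrac12\E[(A-B)^2]$ for i.i.d.\ $A,B$, applied to $A=\E_X[q(X,Y)^2]$ and $B=\E_X[q(X,Z)^2]$, to get
$$
\var_Y\left[\E_X[q(X,Y)^2]\right]=\tfrac12\E_{Y,Z}\left[\bigl(\E_X[q(X,Y)^2-q(X,Z)^2]\bigr)^2\right].
$$
Next, observe that as $q$ has degree $d$, the random variable $\E_X[q(X,Y)^2-q(X,Z)^2]$ is a polynomial in $(Y,Z)$ of degree at most $2d$. Applying Corollary \ref{GeneralHypercontractiveCor} to pass from the $L^2$ norm to the $L^1$ norm at a cost of $2^{O(d)}$, we obtain
$$
\E_{Y,Z}\left[\bigl(\E_X[q(X,Y)^2-q(X,Z)^2]\bigr)^2\right]\leq 2^{O(d)}\bigl(\E_{Y,Z}\bigl|\E_X[q(X,Y)^2-q(X,Z)^2]\bigr|\bigr)^2.
$$

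Then I would push the absolute value inside the expectation by Jensen/triangle and factor $q(X,Y)^2-q(X,Z)^2=(q(X,Y)-q(X,Z))(q(X,Y)+q(X,Z))$, giving
$$
\leq 2^{O(d)}\bigl(\E_{X,Y,Z}\bigl[|q(X,Y)-q(X,Z)|\,|q(X,Y)+q(X,Z)|\bigr]\bigr)^2.
$$
Cauchy--Schwarz then separates this as
$$
\leq 2^{O(d)}\,\E_{X,Y,Z}[(q(X,Y)-q(X,Z))^2]\cdot\E_{X,Y,Z}[(q(X,Y)+q(X,Z))^2].
$$
Finally the second factor is bounded by $4\,\E_{X,Y}[q(X,Y)^2]$ using $(a+b)^2\le 2(a^2+b^2)$ and the fact that $Y,Z$ are identically distributed, which yields the claimed bound.

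There is no real obstacle; the only subtle step is the use of hypercontractivity to reverse the usual $L^1\le L^2$ inequality, but Corollary \ref{GeneralHypercontractiveCor} is set up precisely for this, and it is what dictates the $2^{O(d)}$ factor. Everything else is algebraic manipulation and Cauchy--Schwarz.
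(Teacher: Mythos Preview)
Your proposal is correct and follows exactly the same route as the paper's proof: variance identity via an independent copy, then Corollary~\ref{GeneralHypercontractiveCor} to pass from $L^2$ to $L^1$ (noting the inner expectation is a degree-$2d$ polynomial in $(Y,Z)$), then factor $a^2-b^2$, Cauchy--Schwarz, and the trivial bound on $\E[(q(X,Y)+q(X,Z))^2]$. There is nothing to add or correct.
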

\begin{proof}
Recall that if $A,B$ are i.i.d. random variables, then $\var[A] = \frac{1}{2}\E[(A-B)^2]$.  We have that
\begin{align*}
\var_Y & \left[ \E_X\left[q(X,Y)^2 \right]\right] = \frac{1}{2}\E_{Y,Z}\left[\left(\E_X[q(X,Y)^2] - E_X[q(X,Z)^2]\right)^2\right]\\
& \leq 2^{O(d)} E_{Y,Z}\left[\left|\E_X[q(X,Y)^2-q(X,Z)^2]\right|\right]^2\\
& \leq 2^{O(d)} E_{X,Y,Z}\left[ |q(X,Y)-q(X,Z)||q(X,Y) + q(X,Z)|\right]^2\\
& \leq 2^{O(d)} \E_{X,Y,Z}\left[(q(X,Y)-q(X,Z))^2 \right]\E_{X,Y,Z}\left[(q(X,Y)+q(X,Z))^2 \right]\\
& \leq  2^{O(d)} \E_{X,Y,Z}\left[(q(X,Y)-q(X,Z))^2 \right]\E_{X,Y}\left[q(X,Y)^2 \right].
\end{align*}
The second line above is due to Corollary \ref{GeneralHypercontractiveCor}.  The fourth line is due to Cauchy-Schwarz.
\end{proof}

\begin{proof}[Proof of Proposition \ref{VarianceBoundProp}]
For fixed $X$, consider the polynomial
\begin{align*}
& q(( Y_2,Z_2,\ldots,Y_{\ell+1},Z_{\ell+1},W_1,\ldots,W_m),Y)\\ & =: D_{Y_2,Z_2}^\theta\cdots D_{Y_{\ell+1},Z_{\ell+1}}^\theta p(N^\theta_{W_1}\cdots N^\theta_{W_m}N^\theta_Y(X)).
\end{align*}
Let $V=(Y_2,Z_2,\ldots,Y_{\ell+1},Z_{\ell+1},W_1,\ldots,W_m)$.  Notice that
\begin{align*}
|p^{(\ell),m}_\theta(N^\theta_Y X)|_2^2 & = \E_V\left[ q(V,Y)^2 \right],\\
\theta|p^{(\ell+1),m}(X)|_2^2 & = \E_{V,Y,Z}\left[(q(V,Y)-q(V,Z))^2 \right],\\
|p^{(\ell),m+1}_\theta(X)|_2^2 & = \E_{V,Y}\left[ q(V,Y)^2 \right].
\end{align*}
Our result follows immediately upon applying the above Lemma to $q(V,Y)$.
\end{proof}

We also prove a relation between the higher averages
\begin{lem}\label{relationLem}
Let $d$ be an integer and $\theta=O(d^{-1})$ a real number.  Then there exist constants $c_0,\ldots,c_{2d+1}$ with $|c_m|=2^{O(d)}$ and $\sum_{m=0}^{2d+1} c_m = 0$ so that for any degree-$d$ polynomial $p$ and any vector $X$,
$$
\sum_{m=0}^{2d+1} c_m (A^\theta)^m p(X) = 0.
$$
\end{lem}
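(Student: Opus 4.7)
My proof plan is to diagonalize $A^\theta$ on the finite-dimensional space of degree-$d$ polynomials, reducing the operator identity to a single scalar identity. Expand $p$ in the (multivariate) Hermite basis as $p = \sum_{|\alpha|\leq d} a_\alpha H_\alpha$. Since $A^\theta$ is the Ornstein--Uhlenbeck semigroup, $A^\theta H_\alpha = \cos(\theta)^{|\alpha|} H_\alpha$, so applying $(A^\theta)^m$ multiplies the $\alpha$-coefficient by $\cos(\theta)^{m|\alpha|}$. Hence
$$
\sum_{m=0}^{2d+1} c_m (A^\theta)^m p = \sum_{|\alpha|\leq d} a_\alpha\, C(\cos(\theta)^{|\alpha|})\, H_\alpha,
$$
where $C(x) := \sum_m c_m x^m$. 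For the left-hand side to vanish on every degree-$d$ polynomial, it suffices (and is necessary) that $C(\cos(\theta)^k) = 0$ for every $k \in \{0,1,\ldots,d\}$.

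I would then construct $C$ explicitly as the monic polynomial $C(x) = \prod_{k=0}^d (x - \cos(\theta)^k)$ of degree $d+1$, and set $c_m = 0$ for $d+2 \leq m \leq 2d+1$. Two things to verify. First, the sum-zero condition is automatic: $\sum_m c_m = C(1) = 0$ since $\cos(\theta)^0 = 1$ is one of the chosen roots. Second, for the size bound, by Vi\`ete's formulas $|c_m|$ equals the $(d+1-m)$-th elementary symmetric polynomial in the $d+1$ quantities $\cos(\theta)^k$, each of modulus at most $1$; this is bounded by $\binom{d+1}{d+1-m} \leq 2^{d+1} = 2^{O(d)}$.

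There is essentially no obstacle in this argument; it is an immediate consequence of the spectral decomposition of $A^\theta$ on the finite-dimensional invariant subspace of polynomials of degree at most $d$. Indeed neither the hypothesis $\theta = O(d^{-1})$ nor the extra slack allowing $c_m$ up through degree $2d+1$ is needed for the existence of such coefficients (the minimal number of terms actually required is $d+2$); these are presumably kept in the statement to match parameters that arise when the lemma is invoked later, or to leave room for a different construction.
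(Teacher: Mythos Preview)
Your argument is correct and is genuinely different from the paper's proof. You exploit directly that $A^\theta$ is diagonal in the Hermite basis with eigenvalues $\cos(\theta)^{|\alpha|}$, so on degree-$d$ polynomials its minimal polynomial has degree at most $d+1$; taking $C(x)=\prod_{k=0}^d (x-\cos(\theta)^k)$ and padding with zeros gives the desired relation, with the size bound following trivially from Vi\`ete and the sum-zero condition from $C(1)=0$. The paper instead works pointwise: for fixed $X$ it writes $(A^\theta)^m p(X)=\E_Y[p(N_Y^{\theta_m}(X))]$ with $\cos\theta_m=\cos(\theta)^m$, observes that $\phi\mapsto p(N^\phi_Y(X))$ is (after the substitution $z=e^{i\phi}$) of the form $z^{-d}q(z)$ for some degree-$2d$ polynomial $q$, and then chooses the $c_m$ by Lagrange interpolation at the $2d+2$ nodes $z_0,\ldots,z_{2d+1}$ so that $\sum c_m z_m^{-d}q(z_m)=0$ for every such $q$. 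This is why the paper needs all $2d+2$ coefficients and the assumption $\theta=O(d^{-1})$ (to control the node spacing $|z_i-z_m|\sim|\sqrt i-\sqrt m|\theta$ in the explicit coefficient estimate), whereas your spectral argument needs only $d+2$ nonzero coefficients and no hypothesis on $\theta$. Your route is shorter and gives a sharper existence statement; the paper's route has the feature that every $c_m$ is explicitly nonzero with $|c_m|=\Theta_d(1)$, which is what is asserted (and used, via division by $c_0$) in the subsequent Corollaries. Your $c_0=(-1)^{d+1}\cos(\theta)^{d(d+1)/2}$ is also bounded below by a constant when $\theta=O(d^{-1})$, so the downstream application would still go through, but the $\Theta_d(1)$ claim for all $m$ would not.
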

\begin{proof}
First note that $(A^\theta)^mp(X) = A^{\theta_m} p(X)$ where $\cos(\theta_m) = \cos(\theta)^m$.  We note that it suffices to find such $c$'s so that for any $Y$
$$
\sum_{m=0}^{2d+1} c_m p(N^{\theta_m}_Y(X)) = 0.
$$
Note that $\cos^2(\theta_m)=1-m\theta^2 + O(d\theta^4).$  Hence $\sin(\theta_m) = \sqrt{m}\theta +O(d\theta^3)$.

Recall that once $X$ and $Y$ are fixed, there is a degree-$2d$ polynomial $q$ so that for $z_m = e^{i\theta_m} = 1+i\sqrt{m}\theta +O(d\theta^2)$, $p(N^{\theta_m}_Y(X)) = z_m^{-d} q(z_m)$.  Hence we just need to pick $c_m$ so that for any degree-$2d$ polynomial $q$ we have that
$$
\sum_{m=0}^{2d+1} c_m z_m^{-d} q(z_m) =0.
$$
Such $c$ exist by standard interpolation results. In particular, it is sufficient to pick
$$
c_m = z_m^d \sqrt{(2d)!} \prod_{i=1,i\neq m}^{2d+1} \frac{\theta}{z_i-z_m}.
$$
For this choice of $c$ we have that
$$
|c_m| = \sqrt{(2d)!} \prod_{i=1,i\neq m}^{2d+1} \frac{\theta}{|z_i-z_m|} = \sqrt{(2d)!} \prod_{i=1,i\neq m}^{2d+1} \Theta\left(\frac{1}{|\sqrt{i}-\sqrt{m}|}\right)
$$
For $1\leq i \leq 2m$ we have that $|\sqrt{i}-\sqrt{m}| = \Theta(|i-m|/\sqrt{m})$.  For $i\geq 2m$, we have that $|\sqrt{i}-\sqrt{m}| = \Theta(\sqrt{i})$.  We evaluate $|c_m|$ based upon whether or not $2m \geq 2d+1$.  If $2m\geq 2d+1$, then
\begin{align*}
|c_m| & =  \sqrt{(2d)!} \prod_{i=1,i\neq m}^{2d+1} \Theta\left(\frac{\sqrt{m}}{|i-m|}\right) \\
& = 2^{O(d)} \sqrt{(2d)!} \frac{m^{d}}{(m-1)!(2d+1-m)!}\\
& = 2^{O(d)} \sqrt{d^{2d}} d^d \frac{\binom{2d}{m-1}}{(2d!)} \\
& = 2^{O(d)}.
\end{align*}
If $2m< 2d+1$,
\begin{align*}
|c_m| & =  \sqrt{(2d)!} \prod_{i=1,i\neq m}^{2m} \Theta\left(\frac{\sqrt{m}}{|i-m|}\right) \prod_{i=2m+1}^{2d+1} \Theta\left(\frac{1}{\sqrt{i}}\right)\\
& = 2^{O(d)} \sqrt{(2d)!} \frac{m^{m}}{(m-1)!(m-1)!}\sqrt{\frac{(2m)!}{(2d+1)!}}\\
& = 2^{O(d)} \frac{m^m \sqrt{(2m)!}}{(m-1)!(m-1)!} \\
& = 2^{O(d)} \frac{m^m \sqrt{(2m)!}\binom{2m-2}{m-1}}{(2m-2)!}\\
&= 2^{O(d)}.
\end{align*}

Considering $q(X)=1$, we find that $\sum_{m=0}^{2d+1} c_m=0$.
\end{proof}

Applying this to the polynomial $|p^{(\ell)}_\theta(X)|$, we find that
\begin{cor}\label{relationCor}
Let $d$ be an integer and $\theta=O(d^{-1})$ a sufficiently small real number (as a function of $d$).  Then there exist constants $c_0,\ldots,c_{4d+1}$ with $|c_m|=\Theta_d(1)$ and $\sum_{m=0}^{4d+1} c_m = 0$ so that for any degree-$d$ polynomial $p$, any vector $X$, and any non-negative integer $\ell$,
$$
\sum_{m=0}^{4d+1} c_m |p^{(\ell),m}_\theta (X)|_2^2 = 0.
$$
Furthermore $\sum_{m=0}^{4d+1} c_m = 0$ and $|c_m| = 2^{O(d)}$.
\end{cor}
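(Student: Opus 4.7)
The plan is to reduce directly to Lemma \ref{relationLem} applied to the function $|p^{(\ell)}_\theta(X)|_2^2$, viewed as a polynomial in $X$. The entire argument rests on the observation that this quantity really is a polynomial of known degree.

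First I would verify that $|p^{(\ell)}_\theta(X)|_2^2$ is a polynomial in $X$ of degree at most $2d$. For fixed $Y_i,Z_i$, the noisy derivative $D^\theta_{Y,Z}$ sends any polynomial in $X$ to a polynomial in $X$ of strictly lower degree, as was noted in the proof of Lemma \ref{HighDerConstLem}. Iterating, $D^\theta_{Y_1,Z_1}\cdots D^\theta_{Y_\ell,Z_\ell}p(X)$ is a polynomial in $X$ of degree at most $d-\ell$ (treating negative degree as $0$). Squaring it gives a polynomial of degree at most $2(d-\ell)\leq 2d$ in $X$, and then taking the expectation over the independent Gaussians $Y_i,Z_i$ preserves the polynomial structure and does not increase the degree in $X$. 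Hence $q(X):=|p^{(\ell)}_\theta(X)|_2^2$ is a polynomial in $X$ of degree at most $2d$.

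Next I would apply Lemma \ref{relationLem} to $q$, with the role of ``$d$'' there played by $2d$. This produces constants $c_0,\ldots,c_{4d+1}$ with $|c_m|=2^{O(d)}$, $\sum_m c_m=0$, and
$$
\sum_{m=0}^{4d+1} c_m (A^\theta)^m q(X) = 0
$$
for every $X$ (for $\theta=O((2d)^{-1})=O(d^{-1})$ sufficiently small, so the hypothesis of Lemma \ref{relationLem} holds). The final step is purely a matter of definition: by the defining equation of $|p^{(\ell),m}_\theta(X)|_2^2$, we have $(A^\theta)^m q(X) = |p^{(\ell),m}_\theta(X)|_2^2$, so the displayed identity is exactly the conclusion of the corollary. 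The bound $|c_m|=\Theta_d(1)$ (equivalently $2^{O(d)}$) and the vanishing $\sum c_m=0$ are inherited directly from Lemma \ref{relationLem}.

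There is essentially no obstacle beyond the verification in step one; the only thing one has to be careful about is that the constants produced by Lemma \ref{relationLem} depend only on $\theta$ and on the degree bound $2d$, not on the particular polynomial $q$ or on $\ell$, so the same $c_m$ work uniformly in $p$, $X$, and $\ell$ as required.
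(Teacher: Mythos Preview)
Your proposal is correct and takes essentially the same approach as the paper: the paper's proof is the single sentence ``Applying this to the polynomial $|p^{(\ell)}_\theta(X)|$,'' and you have simply filled in the details, namely that $|p^{(\ell)}_\theta(X)|_2^2$ is a polynomial in $X$ of degree at most $2d$, that Lemma~\ref{relationLem} therefore applies with $2d$ in place of $d$, and that $(A^\theta)^m|p^{(\ell)}_\theta(X)|_2^2 = |p^{(\ell),m}_\theta(X)|_2^2$ by definition. Your remark that the constants $c_m$ depend only on $\theta$ and the degree bound $2d$, hence are uniform in $p$, $X$, and $\ell$, is a useful clarification.
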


In particular,
\begin{cor}\label{allCloseCor}
There exists some absolute constant $\alpha$, so that if $\theta=O(d^{-1})$ and if
$$\left|\log\left( \frac{|p^{(\ell),m}_\theta (X)|_2^2}{|p^{(\ell),m+1}_\theta (X)|_2^2}\right)\right| < \alpha^d$$
for some $\ell$ and all $1\leq m \leq 4d$, then all of the $|p^{(\ell),m}_\theta (X)|_2^2$ for that $\ell$ and all $0\leq m \leq 4d+1$ are within constant multiples of each other.
\end{cor}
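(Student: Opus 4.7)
The plan is to use the linear relation from Corollary \ref{relationCor} as a bridge that pins down $a_0 := |p^{(\ell),0}_\theta(X)|_2^2$ in terms of the other $a_m := |p^{(\ell),m}_\theta(X)|_2^2$ for $1 \le m \le 4d+1$. The hypothesis only directly controls consecutive ratios for $m \ge 1$; it is the identity $\sum_m c_m a_m = 0$, combined with $\sum_m c_m = 0$, that lets us reach $a_0$.

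First I would observe that chaining the hypothesis $|\log(a_m/a_{m+1})| < \alpha^d$ over $1 \le m \le 4d$ yields $|\log(a_m/a_{m'})| \le 4d\,\alpha^d$ for all $m,m' \in \{1,\ldots,4d+1\}$. For any absolute $\alpha < 1$, $4d\,\alpha^d$ is bounded by an absolute constant uniformly in $d \ge 1$, so writing $a_m = a_1(1 + \delta_m)$ we have $|\delta_m| = O(d\alpha^d)$ for every $m \ge 1$; in particular the $a_m$ with $m \ge 1$ are already within a fixed multiplicative factor of each other.

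Next I would substitute these into the linear relation from Corollary \ref{relationCor}. Using $\sum_{m=0}^{4d+1}c_m = 0$,
\begin{align*}
c_0 a_0 &= -\sum_{m=1}^{4d+1} c_m a_m = -a_1 \sum_{m=1}^{4d+1} c_m - a_1 \sum_{m=1}^{4d+1} c_m \delta_m = c_0 a_1 - a_1 \sum_{m=1}^{4d+1} c_m \delta_m,
\end{align*}
so $|a_0 - a_1| \le (a_1/|c_0|) \cdot 2^{O(d)} \cdot O(d\alpha^d)$. Because $c_0$ is produced by the explicit Lagrange interpolation in the proof of Lemma \ref{relationLem} (with $d$ replaced by $2d$ to handle the degree-$2d$ polynomial $|p^{(\ell)}_\theta|_2^2$), essentially the same denominator estimate $\prod_{i\ne 0}|z_i - z_0|$ carried out there for the other $c_m$ gives the bound $1/|c_0| \le 2^{O(d)}$. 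Hence $|a_0 - a_1| \le a_1 \cdot 2^{O(d)}\alpha^d$, and choosing the absolute constant $\alpha$ smaller than $2^{-C}$ for a sufficiently large $C$ absorbing all of these $2^{O(d)}$ factors makes $2^{O(d)}\alpha^d \le 1/2$ uniformly in $d$. Then $a_0 \in [a_1/2, 3a_1/2]$, which combined with the first step places all $a_m$ for $0 \le m \le 4d+1$ within constant multiples of one another.

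The main obstacle I expect is ensuring that a single absolute $\alpha$ works for every $d$ at once. This reduces entirely to obtaining an \emph{upper} bound of the form $2^{O(d)}$ on $1/|c_0|$, which is not recorded explicitly in Lemma \ref{relationLem} or Corollary \ref{relationCor} but can be read off from the same $|\sqrt{i} - \sqrt{m}|$-style computation already performed there. Once that exponent is in hand, the rest of the argument is just bookkeeping on the perturbation $\sum c_m \delta_m$.
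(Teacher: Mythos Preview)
Your proof is correct and follows essentially the same route as the paper: chain the hypothesis to get $a_m = a_1(1+O(d\alpha^d))$ for $m\ge 1$, plug into the relation $\sum c_m a_m = 0$ together with $\sum c_m = 0$, and absorb the $2^{O(d)}$ factors by choosing $\alpha$ small enough. You are right that the step requiring $1/|c_0|\le 2^{O(d)}$ is not stated explicitly in Corollary~\ref{relationCor} but is exactly what the paper also uses (in the form $|c_m/c_0|\le 2^{O(d)}$) and can be extracted from the Lagrange-interpolation computation in Lemma~\ref{relationLem}.
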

\begin{proof}
It is clear that $|p^{(\ell),m}_\theta (X)|_2^2$ are within $1+O(d \alpha^d)$ of each other for $1\leq m\leq 4d+1$.  By Corollary \ref{relationCor}, we have that
\begin{align*}
|p^{(\ell),0}_\theta (X)|_2^2 &= \sum_{m=1}^{4d} \frac{-c_m}{c_0} |p^{(\ell),m}_\theta (X)|_2^2\\
& = \sum_{m=1}^{4d} \frac{-c_m}{c_0} |p^{(\ell),1}_\theta (X)|_2^2\left( 1 + O(d\alpha^d)\right)\\
& = \sum_{m=1}^{4d} \frac{-c_m}{c_0} |p^{(\ell),1}_\theta (X)|_2^2 + |p^{(\ell),1}_\theta (X)|_2^2 O\left(d\alpha^d \sum_{m=1}^{4d} \left| \frac{c_m}{c_0}\right|\right)\\
& = |p^{(\ell),1}_\theta (X)|_2^2\left(1 + O\left( d^2\alpha^d 2^{O(d)} \right) \right).
\end{align*}
Hence for $\alpha$ sufficiently small, $|p^{(\ell),0}_\theta (X)|_2^2$ is within a constant multiple of $|p^{(\ell),1}_\theta (X)|_2^2$.
\end{proof}

\section{Proof of Theorem \ref{MainThm}}\label{mainSec}

We now fix $c,\epsilon,d,N,k,p$ as in Theorem \ref{MainThm}.  Namely $c,\epsilon>0$, $d$ is a positive integer, $N$ an integer bigger than $B(c)^d \epsilon^{-4-c}$, and $k$ an integer bigger than $B(c)d$ for $B(c)$ some sufficiently large number depending only on $c$, and $p$ a degree-$d$ polynomial.  We fix $\theta=\arcsin\left(\frac{1}{\sqrt{N}}\right) \sim B(c)^{d/2} \epsilon^{-2-c/2}$.

Let $\rho:\R\rightarrow [0,1]$ be a smooth function so that $\rho(x)=0$ if $x<-1$ and $\rho(x)=1$ if $x>0$.  Let $\sigma:\R\rightarrow [0,1]$ a smooth function so that $\sigma(x) = 1$ if $|x|<1/3$ and $\sigma(x) = 0$ if $|x|>1/2$.  Let $\alpha$ be the constant given in Corollary \ref{allCloseCor}.

For $0\leq \ell \leq d$, $0\leq m \leq 4d+1$, let $q_{\ell,m}(X)$ be the degree-$2d$ polynomial $|p^{(\ell),m}_\theta(X)|_2^2$.  Recall by Lemma \ref{HighDerConstLem} that $q_{d,m}$ is constant.  We let $g_\pm(X)$ be
$$
I_{(0,\infty)}(\pm p(X))\prod_{\ell=0}^{d-1} \rho\left(\log\left(\frac{q_{\ell,0}(X)}{\epsilon^2 q_{\ell+1,0}(X) } \right) \right)\prod_{m=0}^{4d-1}\sigma\left(\alpha^{-d}\log\left(\frac{q_{\ell,m}(X)}{q_{\ell,m+1}(X)}\right)\right).
$$
Where $I_{(0,\infty)}(x)$ above is the indicator function of the set $(0,\infty)$.  Namely it is $1$ for $x>0$ and $0$ otherwise.

$g_\pm$ approximates the indicator functions of the sets where $p(X)$ is positive or negative.  To make this intuitive statement useful we prove:
\begin{lem}\label{gApproxLem}
The following hold:
\begin{itemize}
\item $g_\pm:\R^n\rightarrow [0,1]$
\item $g_+(X)+g_-(X)\leq 1$ for all $X$
\item For $Y$ a standard Gaussian $\E_Y[1-g_+(Y)-g_-(Y)] \leq 2^{O(d)} \epsilon $.
\end{itemize}
\end{lem}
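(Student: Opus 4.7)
The plan is to treat the three bullets in order, with essentially all of the substance in the third. For bullet one, each factor $I_{(0,\infty)}(\pm p(X))$, $\rho(\cdot)$, $\sigma(\cdot)$ is valued in $[0,1]$ by definition, so the product is too. For bullet two, the crucial observation is that $g_+$ and $g_-$ share every $\rho$- and $\sigma$-factor and differ only in the leading indicator; since $I_{(0,\infty)}(p(X))+I_{(0,\infty)}(-p(X))\le 1$ pointwise (with equality exactly when $p(X)\ne 0$) and the shared product lies in $[0,1]$, we get $g_+(X)+g_-(X)\le 1$.

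For bullet three, I would define the ``bad'' events and union-bound. Let $A_\ell$, for $0\le \ell\le d-1$, be the event that $|p^{(\ell)}_\theta(Y)|_2 \le \epsilon\,|p^{(\ell+1)}_\theta(Y)|_2$, and let $B_{\ell,m}$, for $0\le \ell\le d-1$ and $0\le m \le 4d-1$, be the event that $|\log(q_{\ell,m}(Y)/q_{\ell,m+1}(Y))|\ge \alpha^d/3$. Outside the union of these events, the argument of every $\rho$-factor is strictly positive (forcing $\rho=1$) and the argument of every $\sigma$-factor lies in $(-1/3,1/3)$ (forcing $\sigma=1$), so up to the measure-zero set $\{p(Y)=0\}$ we have $g_+(Y)+g_-(Y) = I_{(0,\infty)}(p(Y)) + I_{(0,\infty)}(-p(Y)) = 1$. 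Elsewhere, bullets one and two give the trivial bound $1-g_+(Y)-g_-(Y)\le 1$. Thus
\[
\E_Y[1-g_+(Y)-g_-(Y)] \le \sum_{\ell}\pr[A_\ell] + \sum_{\ell,m}\pr[B_{\ell,m}].
\]

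The first sum is controlled by Corollary \ref{AvDerIneqCor}, which gives $\pr[A_\ell]\le 2^{O(d)}\epsilon$, summing to $2^{O(d)}\epsilon$. For the second sum, I would first observe, via the elementary estimate $|\log(1+x)|\le 2|x|$ valid for $|x|\le 1/2$, that $|\log(q_{\ell,m}/q_{\ell,m+1})|\ge \alpha^d/3$ forces $|q_{\ell,m+1}-q_{\ell,m}|$ to exceed a small constant multiple of $\alpha^d\cdot q_{\ell,m}$; applying Lemma \ref{AverageCloseLem} with this choice of $\epsilon' = \Theta(\alpha^d) = 2^{-O(d)}$ then gives $\pr[B_{\ell,m}]\le 2^{O(d)}\theta$. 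Summing over the $4d^2$ pairs yields $2^{O(d)}\theta$, and since $\theta=\Theta(N^{-1/2}) = 2^{-O(d)}\epsilon^{2+c/2}$ is much smaller than $\epsilon$, the total is $2^{O(d)}\epsilon$, as required.

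The only real obstacle is bookkeeping: ensuring that each factor of $\alpha^{-d}$, $d$, $d^2$, and $2^{O(d)}$ appearing in the union-bound collapses into the single stated $2^{O(d)}$, and verifying that the prescribed $\theta$ is small enough in $\epsilon$ to absorb the $2^{O(d)}$ in the $\sigma$-contribution. No new ideas beyond Corollary \ref{AvDerIneqCor} and Lemma \ref{AverageCloseLem} are needed.
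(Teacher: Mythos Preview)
Your proposal is correct and follows essentially the same route as the paper's proof: the first two bullets are immediate from the definition, and the third is a union bound over the events where some $\rho$-factor or some $\sigma$-factor fails to equal $1$, with Corollary~\ref{AvDerIneqCor} handling the former and Lemma~\ref{AverageCloseLem} the latter. The paper states this in one terse sentence, while you spell out the bookkeeping (and the measure-zero caveat for $\{p(Y)=0\}$) more carefully; no substantive difference.
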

\begin{proof}
The first two statements follow immediately from the definition.  The third statement follows by noting that $g_+(Y)+g_-(Y)=1$ unless $q_{\ell,0}(Y)<\epsilon^2 q_{\ell+1,0}(X)$ for some $\ell$ or $|q_{\ell,m+1}(X)-q_{\ell,m}(X)|<\Omega(\alpha^d)|q_{\ell,m}(X)|$ for some $\ell,m$.  By Corollary \ref{AvDerIneqCor} and Lemma \ref{AverageCloseLem}, this happens with probability at most $2^{O(d)}\epsilon$.
\end{proof}

We also want to know that the derivatives of $g_\pm$ are relatively small.
\begin{lem}\label{gDerBoundLem}
Consider $g_\pm$ as a function of $q_{\ell,m}(X)$ (consider $\sgn(p(X))$ to be constant).  Then the $t^{th}$ partial derivative of $g_\pm$, $\frac{\partial^t}{\partial q_{\ell_1,m_1}\partial q_{\ell_2,m_2}\cdots \partial q_{\ell_t,m_t}}$ is at most $O_t(1) 2^{O(dt)}\prod_{j=1}^t \frac{1}{q_{\ell_j,m_j}}$.
\end{lem}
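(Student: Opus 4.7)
The plan is to view $g_\pm$ as a product of factors, apply the multivariate Leibniz rule, and then bound the derivative of each individual factor using Faà di Bruno's formula. Specifically, dropping the indicator factor (which is treated as a constant by hypothesis), we may write $g_\pm = \prod_i F_i$ where each $F_i$ is of the form $\phi_i(h_i(q))$ with $\phi_i \in \{\rho,\sigma\}$, and $h_i = c_i \log q_{a_i} - c_i \log q_{b_i} + \text{const}$ for some pair of variables $q_{a_i}, q_{b_i}$ among the $q_{\ell,m}$ and some constant $c_i \in \{1, \alpha^{-d}\}$. The key structural observation is that each specific $q_{\ell,m}$ appears in only a bounded number (at most four) of the factors $F_i$, since it can only occur in the $\rho$-factors indexed by $\ell$ or $\ell-1$ and in the $\sigma$-factors indexed by $(\ell,m)$, $(\ell, m-1)$.

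First, I apply the multivariate product rule to $\partial^t g_\pm / \partial q_{\ell_1,m_1} \cdots \partial q_{\ell_t,m_t}$, expanding it as a sum over assignments of the $t$ derivative slots to the factors $F_i$. Since each $q_{\ell_j,m_j}$ appears in at most four factors, the number of nonzero assignments is at most $4^t = O_t(1)$.

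Second, for each factor $F_i$ receiving $s_i$ derivatives (with respect to variables $q_{j_1}, \ldots, q_{j_{s_i}}$), Faà di Bruno gives
\[
\frac{\partial^{s_i} F_i}{\partial q_{j_1} \cdots \partial q_{j_{s_i}}} = \sum_{\pi} \phi_i^{(|\pi|)}(h_i) \prod_{B \in \pi} \frac{\partial^{|B|} h_i}{\prod_{k \in B}\partial q_{j_k}},
\]
where the sum is over set partitions $\pi$ of $\{1, \ldots, s_i\}$. Since $h_i$ is an affine combination of $\log q_{a_i}$ and $\log q_{b_i}$ with coefficients of absolute value $c_i \leq \alpha^{-d} = 2^{O(d)}$, each block $B$ either contributes zero (if the $q_{j_k}$ for $k \in B$ mix $q_{a_i}$ and $q_{b_i}$) or contributes $\pm c_i (|B|-1)!/q^{|B|}$ for the common variable $q \in \{q_{a_i}, q_{b_i}\}$. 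In the latter case, $q^{|B|} = \prod_{k \in B} q_{j_k}$, so taking the product over blocks produces exactly $\prod_{k=1}^{s_i} 1/q_{j_k}$. Using that $\rho, \sigma$ are smooth with bounded derivatives ($|\phi^{(n)}|_\infty = O_n(1)$), that the Bell number of $s_i$ is $O_{s_i}(1)$, and that $c_i^{|\pi|} \leq 2^{O(d s_i)}$, we obtain
\[
\left| \frac{\partial^{s_i} F_i}{\partial q_{j_1} \cdots \partial q_{j_{s_i}}} \right| \leq O_{s_i}(1) \cdot 2^{O(d s_i)} \prod_{k=1}^{s_i} \frac{1}{q_{j_k}}.
\]

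Third, multiplying these bounds over all factors $F_i$ (with the remaining factors bounded by $1$ since $\rho,\sigma \in [0,1]$) and using $\sum_i s_i = t$, each assignment contributes at most $O_t(1) \cdot 2^{O(dt)} \prod_{j=1}^t 1/q_{\ell_j,m_j}$. Summing over the $O_t(1)$ valid assignments preserves this bound, giving the lemma. The only mildly delicate step is the bookkeeping in the Faà di Bruno expansion that guarantees one reciprocal factor per derivative slot; this is automatic from the fact that differentiating $\log q$ once produces $1/q$, and higher-order derivatives preserve the ``one power of $1/q$ per derivative'' accounting per block.
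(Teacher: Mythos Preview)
Your proof is correct and is essentially an explicit unrolling of the paper's argument: the paper simply observes that $g_\pm$ is a smooth function of the variables $\log q_{\ell,m}$ whose $t$-th partials in those coordinates are $O_t(\alpha^{-dt})$, and then converts back via $\partial/\partial q = q^{-1}\,\partial/\partial(\log q)$, which is exactly what your Leibniz + Fa\`a di Bruno computation carries out factor by factor. The only additional ingredient you supply (that each $q_{\ell,m}$ appears in $O(1)$ factors, so the Leibniz expansion has $O_t(1)$ nonzero terms) is precisely what the paper's terse phrase ``follows easily'' is hiding.
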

\begin{proof}
The bound follows easily after considering $g$ as a function of the $\log(q_{\ell,m})$.  Noting that the $t^{th}$ partial derivatives of $q$ in terms of these logs are at most $O_t(\alpha^{-dt})$, the result follows easily.
\end{proof}

\begin{lem}\label{SmallChangeLem}
Given $c\leq 4$,
let $X$ be any vector and let $Y$ and $Z$ be $k$-independent families of Gaussians with $k\geq 512 c^{-1}d$, $N=\epsilon^{-4-c}$ and $\theta=O(d^{-2})$.  Then
$$
\left| \E_Y \left[g_+(N^\theta_Y(X))\right] - \E_Z\left[g_+(N^\theta_Z(X))\right] \right| \leq 2^{O_c(d)}\epsilon N^{-1}.
$$
And the analogous statement holds for $g_-$.
\end{lem}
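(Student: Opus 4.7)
The plan is to apply the replacement method via Taylor expansion. Write $W(Y):=N^\theta_Y(X)=\cos\theta\,X+\sin\theta\,Y$. Since $Y$ and $Z$ are $k$-wise independent standard Gaussian families, every polynomial monomial of total degree less than $k$ (hence involving fewer than $k$ distinct coordinates) has identical expectation under both distributions. Taylor-expanding $g_+(W(Y))$ in $Y$ to order $k-1$ around $Y=0$, the low-order Taylor polynomial contributes identically to $\E_Y$ and $\E_Z$, and the difference of the two expectations equals the difference of the $k$-th order Taylor remainders. It therefore suffices to bound $|\E_Y[R_k]|$, and symmetrically $|\E_Z[R_k]|$, by $2^{O_c(d)}\epsilon N^{-1}$.

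A key preliminary observation is that $g_+$ is actually a smooth function on all of $\R^n$, despite the discontinuous factor $I_{(0,\infty)}(p(W))$. On the support of the smooth envelope $h(W):=\prod_\ell \rho(\cdots)\prod_{\ell,m}\sigma(\cdots)$, chaining the $\rho$-factors from $\ell=0$ to $\ell=d-1$ and using $q_{0,0}=p^2$ gives $p(W)^2\geq\Omega(\epsilon^{2d})\,q_{d,0}$, which by Lemma \ref{HighDerConstLem} is a strictly positive constant independent of $W$. Hence $h$ vanishes in an open neighborhood of $\{p=0\}$, so $g_+=I_{(0,\infty)}(p)\cdot h$ is smooth. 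Moreover, away from $\{p=0\}$ one may write $g_+=\Psi(\vec{q}(W))$ for a smooth $\Psi$, whose mixed partial derivatives in the $q_{\ell,m}$ are bounded by Lemma \ref{gDerBoundLem}.

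Since $W(Y)$ is affine in $Y$ with Jacobian $\sin\theta\cdot I$, the $k$-th $Y$-derivative of $g_+(W(Y))$ equals $(\sin\theta)^k$ times the $k$-th $W$-derivative of $g_+$, evaluated at $W(Y)$. Applying Fa\`a di Bruno to $g_+=\Psi(\vec{q}(W))$ expresses this as a sum, over set partitions $\pi$ of $\{1,\ldots,k\}$, of terms $\partial^{|\pi|}_v\Psi(\vec{q})\cdot\prod_{b\in\pi}(Y\cdot\nabla_W)^{|b|}q_{\ell_b,m_b}$. Lemma \ref{gDerBoundLem} bounds each $\partial^{|\pi|}_v\Psi$ by $2^{O(dk)}\prod_b q_{\ell_b,m_b}^{-1}$, while each polynomial factor $(Y\cdot\nabla_W)^{|b|}q_{\ell_b,m_b}$ has degree $|b|$ in $Y$ (and vanishes once $|b|>2d$). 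Taking $Y$-moments and applying hypercontractivity (Corollary \ref{GeneralHypercontractiveCor}), the $L^2$-norm in $Y$ of each polynomial factor at a fixed evaluation point is, via Proposition \ref{VarianceBoundProp} and Lemma \ref{AverageCloseLem}, at most $2^{O(d)}\sqrt{q_{\ell_b,m_b}(X)}$; these $\sqrt{q_{\ell_b,m_b}}$ factors then cancel the $q_{\ell_b,m_b}^{-1}$ factors from the $\Psi$-derivative bound, leaving only $2^{O(d)}$ per block.

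The main obstacle is carrying out this cancellation carefully: the $\Psi$-derivative bound involves $q_{\ell,m}(W^*)$ at an intermediate $W^*$ on the segment from $\cos\theta X$ to $W(Y)$, whereas the moment bound naturally involves values at $X$. I would handle this by conditioning on the \emph{good} event that every $q_{\ell,m}(W^*)$ is within a constant multiple of $q_{\ell,m}(X)$; this event has very high probability by Lemma \ref{AverageCloseLem}, further supported by the $\sigma$-factors in $g_+$ already forcing nearby ratios to be close. On the complement we fall back on $|g_+|\leq 1$ together with the tail estimate. Once cancellation is established, each of the at-most $k^k$ partitions contributes $(\sin\theta)^k\,2^{O(dk)}\leq N^{-k/2}\,2^{O(dk)}$ to $|\E_Y[R_k]|$. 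For $k\geq 512c^{-1}d$ and $N=\epsilon^{-4-c}$, the factor $N^{-k/2}=\epsilon^{(4+c)k/2}$ dwarfs $2^{O(dk)}$ and yields the stated bound $2^{O_c(d)}\epsilon N^{-1}$. The argument for $g_-$ is identical by symmetry.
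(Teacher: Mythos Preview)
Your argument has a genuine gap at the cancellation step, and the paper's proof is organized quite differently precisely to avoid it.

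You assert that the $L^2$-norm in $Y$ of each factor $(Y\cdot\nabla_W)^{|b|}q_{\ell_b,m_b}$ is at most $2^{O(d)}\sqrt{q_{\ell_b,m_b}(X)}$, citing Proposition~\ref{VarianceBoundProp}. But that proposition does not say this: it bounds $\var_Y[q_{\ell,m}(N^\theta_Y X)]$ by $2^{O(d)}\theta\,q_{\ell+1,m}(X)\,q_{\ell,m+1}(X)$, i.e.\ in terms of the \emph{next} derivative level $q_{\ell+1,\cdot}$, not $q_{\ell,\cdot}$ itself. After dividing by the $\Psi$-derivative factor $q_{\ell,m}^{-1}$ you are left with ratios of the form $q_{\ell+1,m}(X)/q_{\ell,m}(X)$, and for an \emph{arbitrary} vector $X$ (which is what the lemma assumes) these ratios are completely uncontrolled. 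Your proposed fallback, conditioning on a ``good event'' and citing Lemma~\ref{AverageCloseLem} for its probability, does not rescue this: Lemma~\ref{AverageCloseLem} is a statement about $\pr_X(\cdot)$ for a \emph{random Gaussian} $X$, whereas here $X$ is fixed and only $Y$ is random. For a bad fixed $X$ (say one with $q_{\ell,1}(X)\ll\epsilon^2 q_{\ell+1,1}(X)$) your remainder bound simply fails.

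The paper's proof is built around exactly this obstruction. It first performs a case analysis on the fixed $X$: if some ratio $q_{\ell,m}(X)/q_{\ell,m+1}(X)$ is far from $1$, or some $q_{\ell,1}(X)<\tfrac{\epsilon^2}{10}q_{\ell+1,1}(X)$, then one shows directly (via moment bounds on the relevant $Q_{\ell,m}=q_{\ell,m}(N^\theta_Y X)$ using Proposition~\ref{VarianceBoundProp} and Corollary~\ref{allCloseCor}) that $g_+(N^\theta_Y X)=0$ except with probability $2^{O_c(d)}\epsilon N^{-1}$, so both expectations are individually tiny. Only in the ``good'' case, where all $q_{\ell,m}(X)$ are within constant multiples across $m$ and $q_{\ell,1}(X)\geq\tfrac{\epsilon^2}{10}q_{\ell+1,1}(X)$, does one Taylor expand---and then in the variables $q_{\ell,m}$, to a \emph{fixed} order $T\in[32/c,64/c]$ independent of $d$, around the means $\E_Y[Q_{\ell,m}]=q_{\ell,m+1}(X)$. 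The good-$X$ hypotheses are exactly what make $\var(Q_{\ell,m})\leq 2^{O(d)}\theta\epsilon^{-2}\E[Q_{\ell,m}]^2$, which drives the remainder bound.

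A secondary but real problem with your route is the order of expansion. You expand to order $k-1$ with $k=\Theta(d/c)$; the Fa\`a di Bruno sum then carries $k^k$ partitions and $|\pi|$-fold $\Psi$-derivatives with $|\pi|$ up to $k$, so the combinatorial constants are of size $2^{O(dk)}=2^{O(d^2/c)}$ (and the $O_t(1)$ in Lemma~\ref{gDerBoundLem} is uncontrolled for $t\sim d$). This cannot be absorbed into $2^{O_c(d)}$ uniformly in $\epsilon$. The paper avoids this by keeping the Taylor order $T=O_c(1)$ and using the $k$-independence of $Y$ only to match moments of degree $\leq 8Td=O_c(d)$ polynomials.
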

\begin{proof}
First note that $\epsilon^2 \theta = O(\epsilon^{c/2})$, and hence that $(\epsilon^2 \theta)^{16/c}=O(\epsilon N^{-1}).$  Let $T$ be an even integer between $32/c$ and $64/c$.

First we deal with the case where $|\log(q_{\ell,m}(X)/q_{\ell,m+1}(X))| > \alpha^d$ for some $0\leq \ell\leq d, 1\leq m \leq 4d$, or $q_{\ell,1}(X) < \epsilon^2/10 q_{\ell+1,1}(X)$ for some $\ell$.  We claim that in either case the $\E\left[g_+(N^\theta_Y(X))\right] = O_d(\epsilon N^{-1})$ and a similar bound holds for $Z$.  If there is such an occurrence, find one with the largest possible $\ell$ and of the second type if possible for the same value of $\ell$.

Suppose that we had an occurrence of the first type.  Namely that for some $\ell,m$, $$|\log(q_{\ell,m}(X)/q_{\ell,m+1}(X))| > \alpha^d.$$   Pick such a one with $\ell$ maximal, and with $m$ minimal for this value of $\ell$.  Consider then the random variables $q_{\ell,m-1}(N^\theta_Y(X))$ and $q_{\ell,m}(N^\theta_Y(X))$.  They have means $q_{\ell,m}(X)$ and $q_{\ell,m+1}(X)$, respectively.  By Proposition \ref{VarianceBoundProp}, their variances are bounded by $$2^{O(d)}\theta q_{\ell+1,m-1}(X) q_{\ell,m}(X) \textrm{,  and  } 2^{O(d)}\theta q_{\ell+1,m}(X) q_{\ell,m+1}(X),$$ respectively.  Since we chose the smallest such $\ell$, all of the $q_{\ell+1,m}(X)$ for $1\leq m \leq 4d+1$ are close to $q_{\ell+1,1}(X)$ with multiplicative error at most $\alpha^d$.  By Corollary \ref{allCloseCor}, this implies that $q_{\ell+1,0}(X)$ is also close.  Hence, the variances of $q_{\ell,m-1}(N^\theta_Y(X))$ and $q_{\ell,m}(N^\theta_Y(X))$ are $O_d(\theta q_{\ell+1,1}(X) q_{\ell,m}(X))$ and $O_d(\theta q_{\ell+1,1}(X) q_{\ell,m+1}(X))$. Since there was no smaller $m$ to choose, $q_{\ell,m}(X)$ is within a constant multiple of $q_{\ell,1}(X)$.  Since we could not have picked an occurrence of the second type with the same $\ell$, we have that $q_{\ell,1}(X) \geq \epsilon^2 q_{\ell+1,1}(X)/10$.  Hence both of these variances are at most $$2^{O(d)}\epsilon^{-2} \theta \max(q_{\ell,m}(X),q_{\ell,m+1}(X))^2).$$  Hence by Corollary \ref{GeneralHypercontractiveCor}, for either of the random variables $Q_1=q_{\ell,m}(N^\theta_Y(X)),$ or $Q_2=q_{\ell,m+1}(N^\theta_Y(X))$ with means $\mu_1,\mu_2$ the $T^{th}$ moment of $|Q_i-\mu_i|$ (using the fact that $Y$ is at least $4Td$-independent) is at most $2^{O_c(d)}\epsilon N^{-1} \max(\mu_i)^T$.  Hence, with probability at least $1-2^{O_c(d)}\epsilon N^{-1}$, $|Q_i-\mu_i| < \alpha^d \max(\mu_i)/10$. But if this is the case, then $|\log(Q_1/Q_2)|$ will be more than $\alpha^d/2$, and $g_+$ will be 0.

Suppose that we had an occurrence of the second type for some $\ell$.  Again by Corollary \ref{allCloseCor}, we have that $q_{\ell+1,0}(X)$ is within a constant multiple of $q_{\ell+1,1}(X)$ and $q_{\ell+2,0}(X)$ within a constant multiple of $q_{\ell+2,1}(X)$.  Let $Q_0$ be the random variable $q_{\ell,0}(N^\theta(Y))$ and $Q_1$ the variable $q_{\ell+1,0}(N^\theta(Y))$.  We note that they have means equal to $q_{\ell,1}(X)$ and $q_{\ell+1,1}(X)$, respectively.  By Proposition \ref{VarianceBoundProp}, their variances are at most $2^{O(d)}\theta q_{\ell+1,1}(X)q_{\ell,1}(X)$ and $2^{O(d)}\theta q_{\ell+2,1}(X)q_{\ell+1,1}(X)$.  Since we had an occurrence at this $\ell$ but not the larger one, these are at most $2^{O(d)}\theta \epsilon^2 q_{\ell+1,1}(X)^2$ and $2^{O(d)}\theta \epsilon^{-2} q_{\ell+1,1}(X)^2$, respectively.  Considering the $T^{th}$ moment of $Q_1$ minus its mean, $\mu_1$, we find that with probability at least $1-2^{O_c(d)}\epsilon N^{-1}$ that $|Q_1-q_{\ell+1,1}|<q_{\ell+1,1}/20$.  Considering the $T^{th}$ moment of $Q_0$ minus its mean, we find that with probability at least $1-2^{O_c(d)}\epsilon N^{-1}$ that $|Q_0-q_{\ell,1}| < \epsilon^2 q_{\ell+1,1}/20$.  Together these imply that $\log(Q_0/(\epsilon^2 Q_1))<-1$ and hence that $g_+(N^\theta_Y(X))=0$.

Finally, we assume that neither of these cases occur.  We note by Corollary \ref{allCloseCor} that for each $m$ and $\ell$ that $q_{\ell,m}(X)$ is within a constant multiple of $q_{\ell,1}(X)$.  We define $Q_{\ell,m} = q_{\ell,m}(N^\theta_Y(X))$ and note by Proposition \ref{VarianceBoundProp} that $\var(Q_{\ell,m})$ is at most $2^{O(d)}\theta \epsilon^{-2} \E[Q_{\ell,m}]^2$.  We wish to show that this along with the $k$-independence of $Y$ is enough to determine $\E_Y[g_+(Q_{\ell,m})]$ to within $2^{O_c(d)}\epsilon N^{-1}$.  We do this by approximating $g_+$ by its Taylor series to degree $T-1$ about $(\E[Q_{\ell,m}])$.  The expectation of the Taylor polynomial is determined by the $4Td$-independence of $Y$.  We have left to show that the expectation of the Taylor error is small.  We split this error into cases based on whether $Q_{\ell,m}$ differs from its mean value by more than a constant multiple.

If no $Q_{\ell,m}$ varies by this much, the Taylor error is at most the sum over sequences $\ell_1,\ldots,\ell_{T},m_1,\ldots,m_{T}$ of $\prod_{i=1}^{T} |Q_{\ell_i,m_i} - \E[Q_{\ell_i,m_i}]|$ times an appropriate partial derivative.  Note that by Lemma \ref{gDerBoundLem} this derivative has size at most $O_d\left(\prod_{i=1}^{T} \frac{1}{|\E[Q_{\ell_i,m_i}]|}\right)$.  Noting that there are at most $2^{O_c(d)}$ such terms and that we can bound the expectation above as
$$\sum_{i=1}^T\left(\frac{|Q_{\ell_i,m_i}-\E[Q_{\ell_i,m_i}]|}{|\E[Q_{\ell_i,m_i}]|} \right)^T. $$
Since $T$ is even, the above is a polynomial in $Y$ of degree $4Td$.  Since $Y$ is $4Td$-independent, the expectation of the above is the same as it would be for $Y$ fully independent.  By Corollary \ref{GeneralHypercontractiveCor} this is at most
$$
2^{O_c(d)}\sum_{i=1}^T \left(\frac{\var_Y[Q_{\ell_i,m_i}]}{\E[Q_{\ell_i,m_i}]^2} \right)^{T/2}  \leq 2^{O_c(d)} (\theta \epsilon^{-2})^{T/2} \leq 2^{O_c(d)}\epsilon N^{-1}.
$$

If some $Q_{\ell,m}$ differs from its mean by a factor of more than 2, the Taylor error is at most 1 plus the size of our original Taylor term.  By Cauchy-Schwarz, the contribution to the error is at most the square root of the expectation of the square of the error term times the square root of the probability that one of the $Q_{\ell,m}$ varies by too much.  By an argument similar to the above, the former is $2^{O_c(d)}$.  To bound the latter, we consider the probability that a particular $Q_{\ell,m}$ varies by too much.  For this to happen $Q_{\ell,m}$ would need to differ from its mean by $2^{O(d)}\left(\theta\epsilon^{-2}\right)^{-1/2}$ times its standard deviation.  Using Corollary \ref{GeneralHypercontractiveCor} and the $8Td$-independence of $Y$, we bound this by considering the $2T^{th}$ moment of $Q_{\ell,m}$ minus its mean value, and obtain a probability of $2^{O_c(d)}(\epsilon N^{-1})^2$.  Hence this term produces a total error of $2^{O_c(d)}\epsilon N^{-1}$.

The argument for $g_-$ is analogous.
\end{proof}

\begin{cor}\label{FoolgCor}
If $\epsilon,c>0$, $X$ is the random variable described in Theorem \ref{MainThm} with $N\geq \epsilon^{-4-c}$, $N=O(d^{-2})$, $k\geq 512c^{-1} d$ and $Y$ is a fully independent family of random Gaussians then
$$
|\E[g_+(X)] - \E[g_+(Y)]| \leq 2^{O_c(d)}\epsilon.
$$
The same also holds for $g_-$.
\end{cor}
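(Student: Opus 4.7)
The plan is to prove this via a standard hybrid argument: replace the $X_i$ one at a time with fully independent Gaussians and apply Lemma \ref{SmallChangeLem} to bound the error incurred at each step. Concretely, I would introduce fully independent standard Gaussians $Y_1, \ldots, Y_N$ (independent of the $X_i$) and define hybrid variables
$$
H_j = \frac{1}{\sqrt{N}}\left(\sum_{i=1}^{j} Y_i + \sum_{i=j+1}^N X_i\right), \qquad 0 \leq j \leq N.
$$
Then $H_0 = X$, and since the sum of $N$ independent standard Gaussians divided by $\sqrt{N}$ is itself a standard Gaussian, $H_N$ is distributed as $Y$. A telescoping decomposition
$$
\E[g_+(X)] - \E[g_+(Y)] = \sum_{j=1}^N \bigl( \E[g_+(H_{j-1})] - \E[g_+(H_j)] \bigr)
$$
reduces the task to bounding each per-step difference by $2^{O_c(d)}\epsilon N^{-1}$.

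For a fixed $j$, I would condition on all the randomness other than $X_j$ and $Y_j$; that is, fix $Y_1,\ldots,Y_{j-1}$ and $X_{j+1},\ldots,X_N$. With $\theta = \arcsin(1/\sqrt{N})$, so that $\sin(\theta) = 1/\sqrt{N}$ and $\cos(\theta) = \sqrt{1-1/N}$, define
$$
W = \frac{1}{\cos(\theta)\sqrt{N}}\left(\sum_{i=1}^{j-1} Y_i + \sum_{i=j+1}^N X_i\right).
$$
Then $H_{j-1} = \cos(\theta) W + \sin(\theta) X_j = N^\theta_{X_j}(W)$ and similarly $H_j = N^\theta_{Y_j}(W)$. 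Applying Lemma \ref{SmallChangeLem} to the (now deterministic) vector $W$, and using that $X_j$ is $k$-independent while $Y_j$ is fully independent (hence also $k$-independent for any $k$), yields
$$
\left|\E_{X_j}[g_+(N^\theta_{X_j}(W))] - \E_{Y_j}[g_+(N^\theta_{Y_j}(W))]\right| \leq 2^{O_c(d)}\epsilon N^{-1}.
$$
Taking expectation over the conditioned variables preserves this bound, so summing over the $N$ telescoping terms gives $|\E[g_+(X)] - \E[g_+(Y)]| \leq 2^{O_c(d)}\epsilon$. The argument for $g_-$ is identical.

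The heavy lifting has already been done in Lemma \ref{SmallChangeLem}; the present corollary is essentially bookkeeping, so no step should be a serious obstacle. The substantive points to verify are (i) that the hypothesis $\theta = O(d^{-2})$ of Lemma \ref{SmallChangeLem} is met, which follows from $\theta \sim 1/\sqrt{N}$ together with the assumption that $N$ is sufficiently large relative to $d$; and (ii) that applying the lemma after conditioning on all other randomness is legitimate, which it is because the lemma allows its first argument $W$ to be an arbitrary vector (not necessarily Gaussian or even random), and after the conditioning $X_j$ and $Y_j$ retain their $k$-independent and fully independent distributions respectively.
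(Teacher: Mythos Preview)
Your proposal is correct and follows essentially the same hybrid argument as the paper's own proof: both introduce independent Gaussians $Y_i$, define the hybrids $\frac{1}{\sqrt{N}}(\sum_{i\leq j}Y_i+\sum_{i>j}X_i)$, rewrite the two adjacent hybrids as $N^\theta_{X_j}(W)$ and $N^\theta_{Y_j}(W)$ for the same base vector $W$ (using $\cos(\theta)\sqrt{N}=\sqrt{N-1}$), and then invoke Lemma~\ref{SmallChangeLem} pointwise in $W$ before averaging. Your indexing is in fact slightly cleaner than the paper's, and your explicit remarks on verifying $\theta=O(d^{-2})$ and the legitimacy of conditioning are welcome clarifications.
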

\begin{proof}
We let $Y_i$ be independent random standard Gaussians and let $Y=\frac{1}{\sqrt{N}}\sum_{i=1}^N Y_i$.  We let $Z^j = \frac{1}{\sqrt{N}} \left(\sum_{i=1}^j Y_i + \sum_{i=j+1}^N X_i\right)$.  Note that $Z^N = Y$ and $Z^0=X$.  We claim that
$$
|\E[g_+(Z^j)] - \E[g_+(Z^{j+1})]| = 2^{O_c(d)}\epsilon N^{-1},
$$
from which our result would follow.  Let $Z_j := \frac{1}{\sqrt{N-1}}\left( \sum_{i=1}^j Y_i  + \sum_{i=j+2}^N X_i\right)$, then this is
\begin{align*}
|\E_{Z_j,Y_j}[g_+(N^\theta_{Y_j}(Z_j))] & - \E_{Z_j,X_j}[g_+(N^\theta_{X_j}(Z_j))]|  \\ & \leq E_{Z_j}\left[ |\E_{Y_j}[g_+(N^\theta_{Y_j}(Z_j))] - \E_{X_j}[g_+(N^\theta_{X_j}(Z_j))]|\right],
\end{align*}
which by Lemma \ref{SmallChangeLem} is at most $2^{O_c(d)}\epsilon N^{-1}$.
\end{proof}

We can now prove Theorem \ref{MainThm}.
\begin{proof}
We prove that for $X$ as given with $N\geq\epsilon^{-4-c}$ and $k\geq 512 c^{-1}d$, and $Y$ fully independent that
$$
|\E[f(X)] - \E[f(Y)]| \leq 2^{O_c(d)}\epsilon.
$$
The rest will follow by replacing $\epsilon$ by $\epsilon' = \epsilon/2^{O_c(d)}$.  We note that $f$ is sandwiched between $2g_+-1$ and $1-2g_-$.  Now $X$ fools both of these functions to within $2^{O_c(d)}\epsilon$.  Furthermore by Lemma \ref{gApproxLem}, they have expectations that differ by $2^{O(d)}\epsilon$.  Therefore
$$
\E[f(X)] \leq \E[1-2g_-(X)] \approx_{2^{O_c(d)}\epsilon} \E[1-2g_-(Y)] \approx_{2^{O(d)}\epsilon} \E[f(Y)].
$$
We also have a similar lower bound.  This proves the Theorem.
\end{proof}

\section{Finite Entropy Version}\label{FiniteEntropySection}

The random variable $X$ described in the previous sections, although it does fool PTFs, has infinite entropy and hence cannot be used directly to make a PRG.  We fix this by instead using a finite entropy random variable that approximates $X$.  In order to make this work, we will need the following Lemma.

\begin{lem}\label{SmallErrorFoolLem}
Let $X_{i}$ be a $k$-independent family of Gaussians for $1\leq i\leq N$, so that the $X_i$ are independent of each other and $k,N$ satisfy the hypothesis in Theorem \ref{MainThm} for some $c,\epsilon,d$.  Let $\delta>0$.  Suppose that $Z_{i,j}$ are any random variables so that for each $i,j$, $|X_{i,j}-Z_{i,j}|<\delta$ with probability $1-\delta$.  Then the family of random variables $Z_j = \frac{1}{\sqrt{N}} \sum_{i=1}^N Z_{i,j}$ fools degree $d$ polynomial threshold functions up to $\epsilon + O(nN\delta + d\sqrt{nN}\log(\delta^{-1})\delta^{1/d}).$
\end{lem}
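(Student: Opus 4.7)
The plan is to couple $X$ and $Z$ coordinate-wise using the hypothesis, and then show that $\sgn(p(X))=\sgn(p(Z))$ with high probability by combining a Taylor bound on $|p(X)-p(Z)|$ with anti-concentration of $|p(X)|$ near zero; the latter is borrowed from the fully independent Gaussian case via Theorem \ref{MainThm}.

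Let $E_1$ be the event that $|X_{i,j}-Z_{i,j}|<\delta$ for every $1\leq i\leq N$ and $1\leq j\leq n$. By hypothesis and the union bound, $\pr(E_1^c)\leq nN\delta$; on $E_1$, one has $\|X-Z\|_2\leq \sqrt{nN}\,\delta$. Fix a degree-$d$ PTF $f=\sgn(p(x))$ and normalize $|p|_2=1$. Since $p$ has degree $d$, the Taylor expansion at $X$ is exact, so
$$
|p(Z)-p(X)| \;\leq\; \sum_{k=1}^d \frac{\|Z-X\|_2^k}{k!}\,\|D^k p(X)\|_{\textrm{HS}}.
$$

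Each $\|D^k p(X)\|_{\textrm{HS}}^2$ is a nonnegative polynomial of degree $2(d-k)$ in $X$ with Gaussian expectation $2^{O(d)}$ (a short Hermite-coefficient calculation using $|p|_2=1$). By Lemma \ref{hypercontractiveLem} and Markov's inequality, after a union bound over $k$ the event
$$
E_3=\{\|D^k p(X)\|_{\textrm{HS}}\leq 2^{O(d)}\log(\delta^{-1})^{(d-k)/2}\text{ for all }1\leq k\leq d\}
$$
holds with probability at least $1-\delta$. Assuming $\sqrt{nN}\,\delta\leq 1$ (else the stated bound is trivial), the $k=1$ term dominates the Taylor sum, and on $E_1\cap E_3$
$$
|p(Z)-p(X)| \;\leq\; s \;:=\; 2^{O(d)}\log(\delta^{-1})^{(d-1)/2}\sqrt{nN}\,\delta.
$$

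A sign flip $f(X)\neq f(Z)$ forces $E_1^c\cup E_3^c\cup\{|p(X)|\leq s\}$. The last event is a difference of two degree-$d$ PTF indicators, so $X$ fools it up to $2\epsilon$ by Theorem \ref{MainThm}; and for a fully independent Gaussian $Y$, Carbery--Wright anti-concentration gives $\pr(|p(Y)|\leq s)=O(ds^{1/d})$. Using $(nN)^{1/(2d)}\leq \sqrt{nN}$, $\log(\delta^{-1})^{(d-1)/(2d)}\leq \log(\delta^{-1})$, and absorbing the $2^{O(1)}$ factor, we get $\pr(|p(X)|\leq s)\leq O(\epsilon)+O(d\sqrt{nN}\log(\delta^{-1})\delta^{1/d})$. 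Combining this with $|\E[f(X)]-\E[f(Z)]|\leq 2\pr(f(X)\neq f(Z))$, Theorem \ref{MainThm}'s guarantee $|\E[f(X)]-\E[f(Y)]|\leq\epsilon$, and the triangle inequality yields the claimed bound.

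The main obstacle is controlling $\|D^k p(X)\|_{\textrm{HS}}$ tightly enough to respect the single power of $\log(\delta^{-1})$ stated. The hypercontractive tail route produces a $\log(\delta^{-1})^{(d-1)/2}$ factor inside $s$, which is tamed after Carbery--Wright's $s^{1/d}$ (combined with the implicit $d$-dependent constant); a direct moment bound on $p(X)-p(Z)$ that exploits the coordinate-wise coupling more directly could plausibly eliminate the stray logarithms altogether.
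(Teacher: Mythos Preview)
Your overall architecture---couple $X$ and $Z$, bound $|p(X)-p(Z)|$, then invoke Theorem~\ref{MainThm} and Carbery--Wright---is exactly the paper's route. The gap is in how you control $|p(X)-p(Z)|$. You bound the Taylor terms via a hypercontractive tail on $\|D^\kappa p(X)\|_{\textrm{HS}}$, citing Lemma~\ref{hypercontractiveLem}. But that lemma is for \emph{fully independent} Gaussians, while here $X$ is built from $k$-wise independent blocks. The sub-exponential tail you claim (yielding the $\log(\delta^{-1})^{(d-\kappa)/2}$ threshold) requires moments of order roughly $\log(\delta^{-1})$; for a degree-$2(d-\kappa)$ polynomial that means matching degree $\Theta(d\log\delta^{-1})$ moments, which $k=O_c(d)$-wise independence does not give once $\delta$ is small. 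With only the $O_c(1)$ moments that are guaranteed to match, Markov gives a polynomial tail, so your threshold becomes $2^{O(d)}\delta^{-\Omega_c(1)}$ rather than $2^{O(d)}\log(\delta^{-1})^{(d-\kappa)/2}$. After Carbery--Wright this produces $\delta^{(1-\Omega_c(1))/d}$, and the inequality $\delta^{-\Omega_c(1)/d}\leq\log(\delta^{-1})$ that you would need to reach the stated bound is false for small $\delta$.

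The paper avoids this difficulty by never taking a polynomial tail bound on $X$. Instead it uses only the one-dimensional marginals: each $X_{i,j}$ is individually an exact standard Gaussian, so $|X_{i,j}|=O(\log\delta^{-1})$ with probability $1-\delta$; a union bound puts $\|X\|_\infty=O(\log\delta^{-1})$ into the already-present $O(nN\delta)$ failure term. Then Lemma~\ref{smallTildeChangeLem} gives a \emph{deterministic} bound $|p(X)-p(Z)|\leq \delta\,(nN)^{d/2}O(\log\delta^{-1})^d$ on that event, and Carbery--Wright applied to this $s$ produces exactly the $d\sqrt{nN}\log(\delta^{-1})\delta^{1/d}$ term. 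Replacing your $E_3$ with the event $\{\|X\|_\infty\leq O(\log\delta^{-1})\}$ and using Lemma~\ref{smallTildeChangeLem} in place of your Taylor/hypercontractivity step would repair the argument; the rest of your write-up (transferring $\{|p(X)|\leq s\}$ to $Y$ via two PTF indicators and Theorem~\ref{MainThm}) is fine, up to the harmless constant in front of $\epsilon$.
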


The basic idea is that with probability $1-nN\delta$, we will have that $|X_{i,j}-Z_{i,j}|<\delta$ for all $i,j$.  If that is the case, then for any polynomial $p$ it should be the case that $p(X)$ is close to $p(Z)$.  In particular, we show:

\begin{lem}\label{smallTildeChangeLem}
Let $p(X)$ be a polynomial of degree $d$ in $n$ variables.  Let $X\in \R^n$ be a vector with $|X|_\infty \leq B$ $(B>1)$.  Let $X'$ be another vector, so that $|X-X'|_\infty < \delta<1$.  Then
$$
|p(X)-p(X')| \leq \delta |p|_2 n^{d/2} O(B)^d
$$
\end{lem}
\begin{proof}
We begin by writing $p$ in terms of Hermite polynomials.  We can write
$$
p(X) = \sum_{i\in S} a_i h_i(X).
$$
Here $S$ is a set of size less than $n^d$, $h_i(X)$ is a Hermite polynomial of degree $d$ and $\sum_{i\in S} a_i^2 = |p|_2^2.$  The Hermite polynomial $h_i$ has $2^{O(d)}$ coefficients each of size $2^{O(d)}$.  Hence any of its partial derivatives at a point of $L^\infty$ norm at most $B$ is at most $O(B)^d$.  Hence by the intermediate value theorem, $|h_i(X)-h_i(X')| = \delta O(B)^d$.  Hence $|p(X)-p(X')| \leq \delta\sum_{i\in S} |a_i| O(B)^d.$  But $\sum_{i\in S} |a_i| \leq |p|_2 \sqrt{|S|}$ by Cauchy-Schwarz.  Hence
$$
|p(X)-p(X')| \leq \delta |p|_2 n^{d/2} O(B)^d.
$$
\end{proof}

We also need to know that it is unlikely that changing the value of $p$ by a little will change its sign.  In particular we have the following anticoncentration result, which is an easy consequence of \cite{anticoncentration} Theorem 8:

\begin{lem}[Carbery and Wright]\label{anticoncentrationLem}
If $p$ is a degree-$d$ polynomial then
$$
\pr(|p(X)| \leq \epsilon|p|_2) = O(d\epsilon^{1/d}).
$$
Where the probability is over $X$, a standard $n$-dimensional Gaussian.
\end{lem}

We are now ready to prove Lemma \ref{SmallErrorFoolLem}.

\begin{proof}
Note that with probability $1-\delta$ that $|X_{i,j}| = O(\log \delta^{-1})$.  Hence with probability $1-O(nN\delta)$ we have that $|X_{i,j}|_\infty = O(\log \delta^{-1})$ and $|X_{i,j}-Z_{i,j}|_\infty<\delta$.  Let $p$ be a degree $d$ polynomial normalized so that $|p|_2=1$.  We may think of $p$ as a function of $nN$ variables rather than just $N$, by thinking of $p(X)$ instead as $p\left( \frac{1}{\sqrt{N}}\sum_{i=1}^N X_i\right)$.  Applying Lemma \ref{smallTildeChangeLem}, we have therefore that with probability $1-O(nN\delta)$ that $|p(X)-p(Z)| < \delta O(\sqrt{nN}\log (\delta^{-1}))^d.$

We therefore have that if $Y$ is a standard family of Gaussians that
\begin{align*}
\pr(p(Z) < 0) & \leq O(nN\delta) + \pr(p(X) < \delta O(\sqrt{nN}\log (\delta^{-1}))^d)\\
&\leq \epsilon + O(nN\delta) + \pr(p(Y) < \delta O(\sqrt{nN}\log (\delta^{-1}))^d)\\
&\leq \epsilon + O(nN\delta + d\sqrt{nN}\log(\delta^{-1})\delta^{1/d}) + \pr(p(Y) < 0).
\end{align*}
The last step above following from Lemma \ref{anticoncentrationLem}.  We similarly get a bound in the other direction, completing the proof.
\end{proof}

We are now prepared to prove Corollary \ref{PRGCor}.

\begin{proof}
Given $\epsilon,c>0$, let $k,N$ be as required in the statement of Theorem \ref{MainThm}.  We will attempt to produce an effectively computable family of random variables $Z_{i,j}$ so that for some $k$-independent families of Gaussians $X_{i}$ we have that $|X_{i,j}-Z_{i,j}|<\delta$ with probability $1-\delta$ for each $i,j$ and $\delta$ sufficiently small.  Our result will then follow from Lemma \ref{SmallErrorFoolLem}.

Firstly, it is clear that in order to do this we need to understand how to actually effectively compute Gaussian random variables.  Note that if $u$ and $v$ are independent uniform $[0,1]$ random variables, then $\sqrt{-2\log(u)}\cos(2\pi v)$ is a Gaussian.  Hence we can let our $X_{i,j}$ be given by $$X_{i,j}=\sqrt{-2\log(u_{i,j})}\cos(2\pi v_{i,j}),$$ where $u_{i}$ and $v_i$ are $k$-independent families of uniform $[0,1]$ random variables.  We let $u_{i,j}',v_{i,j}'$ be $M$-bit approximations to $u_{i,j},v_{i,j}$ (i.e. $u_{i,j}'$ is $u_{i,j}$ rounded up to the nearest multiple of $2^{-M}$, and similarly for $v_{i,j}'$), and let $Z_{i,j}=\sqrt{-2\log(u_{i,j}')}\cos(2\pi v_{i,j}')$.  Note that we can equivalently compute $Z_{i,j}$ be letting $u_{i}',v_{i}'$ be $k$-independent families of variables taken uniformly from $\{2^{-M},2\cdot 2^{-M},\ldots,1\}$.  Hence, the $Z_{i,j}$ are effectively computable from a random seed of size $O(kNM)$.

We now need to show that $|X_{i,j}-Z_{i,j}|$ is small with high probability.  Let $a(u,v)=\sqrt{-2\log(u)}\cos(2\pi v)$.  Note that for $u,v\in[0,1]$ that $|a'| = O(1+u^{-1} + (1-u)^{-1/2})$.  Therefore, (unless $u_{i,j}'=1$) we have that since $X_{i,j} = a(u_{i,j},v_{i,j})$ and $Z_{i,j} = a(u_{i,j}',v_{i,j}')$, and since $|u_{i,j}-u_{i,j}'|,|v_{i,j}-v_{i,j}'|\leq 2^{-M}$, we have that
$$
|X_{i,j} - Z_{i,j} | = O(2^{-M}(1+u^{-1} + (1-u)^{-1/2})).
$$
Now letting $\delta= \Omega(2^{-M/2})$, we have that $2^{-M}(1+u^{-1} + (1-u)^{-1/2}) <\delta$ with probability more than $1-\delta$.  Hence for such $\delta$, we can apply Lemma \ref{SmallErrorFoolLem} and find that $Z$ fools degree $d$ polynomial threshold functions to within $\epsilon + O(nN\delta + d\sqrt{nN}\log(\delta^{-1})\delta^{1/d}).$  If $\delta < \epsilon^{3d} (dnN)^{-3d}$, then this is $O(\epsilon)$ (since for $x>d^{3d}$, we have that $x\log^{-d}(x)>x^{1/3}$).  Hence with $k=\Omega_c(d),N=2^{\Omega_c(d)}\epsilon^{-4-c}$ and $M=\Omega_c(d\log(dn\epsilon^{-1}))$, this gives us a PRG that $\epsilon$-fools degree $d$ polynomial threshold functions and has seed length $O(kNM)$.  Changing $c$ by a bit to absorb the $\log \epsilon^{-1}$ into the $\epsilon^{-4-c}$, and absorbing the $d\log d$ into the $2^{O_c(d)}$, this seed length is $\log(n)2^{O_c(d)}\epsilon^{-4-c}$.

\end{proof}


\begin{thebibliography}{[99]}

\bibitem{curciutApp} Richard Beigel \emph{The polynomial method in circuit complexity}, Proc. of 8th Annual Structure in Complexity Theory Conference (1993), pp. 82-95.

\bibitem{anticoncentration} A. Carbery, J. Wright \emph{Distributional and $L^q$ norm inequalities for polynomials over convex bodies in $\R^n$}
Mathematical Research Letters, Vol. 8(3) (2001), pp. 233–248.

\bibitem{DGJSV} I. Diakonikolas, P. Gopalan, R. Jaiswal, R. Servedio and E. Viola, \emph{Bounded independence
fools halfspaces}, Proceedings of the 50th Annual IEEE Symposium on
Foundations of Computer Science (FOCS), 2009.

\bibitem{DKN} Ilias Diakonikolas, Daniel M. Kane, Jelani Nelson, \emph{Bounded Independence Fools Degree-2 Threshold Functions}, Foundations of Computer Science (FOCS 2010).

\bibitem{K} Daniel M. Kane \emph{$k$-Independent Gaussians Fool Polynomial Threshold Functions}, arXiv:1012.1614v1.

\bibitem{learningApp} Adam R. Klivans, Rocco A. Servedio \emph{Learning DNF in time $2^{O(n^{1/3})}$}, J. Computer and System Sciences Vol. 68 (2004), p. 303-318.

\bibitem{hypercontractivity} Nelson \emph{The free {M}arkov field}, J. Func. Anal. Vol. 12 no. 2 (1973), p. 211-227

\bibitem{MZ} Raghu Meka, David Zuckerman \emph{Pseudorandom generators for polynomial threshold functions}, Proceedings of the 42nd ACM Symposium on Theory Of Computing (STOC 2010).

\bibitem{commApp} Alexander A. Sherstov \emph{Separating AC0 from depth-2 majority circuits}, SIAM J. Computing Vol. 38 (2009), p. 2113-2129.


\end{thebibliography}
\end{document}